\documentclass[11pt, reqno]{amsart}

\usepackage[utf8]{inputenc}
\usepackage{pgffor}
\usepackage{xcolor,soul}
\definecolor{hyperlink}{rgb}{0.7 0 0}
\usepackage[colorlinks=true,allcolors=hyperlink]{hyperref}
\usepackage{booktabs}
\usepackage{mathtools}
\usepackage{multicol}
\usepackage{float}
\usepackage{mathdots}
\usepackage{amsthm}
\usepackage{amssymb, mathabx}
\usepackage{graphicx}
\usepackage{cases}
\usepackage{array}
\usepackage{enumitem}
\usepackage[figureposition=top, skip=3pt]{caption}
\usepackage{subcaption}
\usepackage{tikz}
\usepackage{pgfplots}
\usepackage[compress]{natbib}
\usepackage{siunitx}

\makeatletter

\usepackage{mdframed}

\newmdenv[
  topline=false,
  bottomline=false,
  rightline=false,
  skipabove=\topsep,
  skipbelow=\topsep
]{leftrule}

\renewenvironment{proof}[1][\proofname]
    {\begin{leftrule}   
        \par        \normalfont \topsep6\p@\@plus6\p@        \trivlist
                        \item[\hskip\labelsep\itshape
              #1.]\ignorespaces
    }
    {\qed\endtrivlist\end{leftrule}}

\def\@seccntformat#1{\@ifundefined{#1@cntformat}   {\csname the#1\endcsname.\ }   {\csname #1@cntformat\endcsname}}\newcommand\paragraph@cntformat{\theparagraph\;}
\renewcommand\theparagraph{\textbullet\ \thesubsubsection(\alph{paragraph})}
\def\subsubsection{\@startsection{subsubsection}{3}  \z@\z@{-\fontdimen2\font}  {\normalfont\bfseries}}
\def\paragraph{\@startsection{paragraph}{4}  \z@\z@{-\fontdimen2\font}  {\normalfont\itshape}}

\DeclareCaptionSubType*{figure}
\newtheorem{proposition}{Proposition}[section]
\newtheorem{theorem}[proposition]{Theorem}
\newtheorem{corollary}[proposition]{Corollary}
\newtheorem*{theorem*}{Theorem}

\theoremstyle{definition}

\theoremstyle{remark}
\newtheorem{remark}[proposition]{Remark}

\renewcommand*\d{\mathop{}\!\d }

\newcommand{\customlabel}[2]{#2\def\@currentlabel{#2}\label{#1}}

\newcommand{\BS}{{\mathrm{BS}}}
\renewcommand{\d}{{\mathrm d}}

\makeatother

\calclayout

\begin{document}
\title{The Delta of a Variance Swap}

\author{ S\'ebastien Bossu and Sebastian Gaitan-Escarpeta\textsuperscript{*}}
\thanks{*\ UNC Charlotte Department of Mathematics and Statistics.  SB is the corresponding author, is indebted to Peter Carr for his interest and suggestions on this topic in 2017, and thanks Mehdi Sonthonnax at UBS for his direct proof of Theorem~\ref{thm:deltazero}, as well as Zhenyu Cui, Kailin Ding, Hao Lu and H. Shi for various past suggestions and contributions.  SB is partially supported by funds provided by the U.S. Department of Defense/War (Grant Award H98230-26-1-0016) and by The University of North Carolina at Charlotte. The United States Government is authorized to reproduce and distribute reprints notwithstanding any copyright notation herein.  }

\maketitle

\begin{abstract}
 We define the variance swap delta as the sensitivity of the price of variance to a change in underlying price.  We use Carr-Madan spanning formulas to analyze this sensitivity when the implied volatility smile curve may depend on the underlying price.  We show that the variance swap total delta is zero for the class of smile curves that are pure functions of (log) moneyness, which goes against the empirical observation that variance is up when the market is down.  We propose a simple modification of the smile to correct this issue.
     \end{abstract}
\renewcommand{\arraystretch}{1.5}

\section{Introduction}

The implied volatility smile or skew has been actively studied by academics and practitioners alike since the 1987 crash as a market adjustment to the Black-Scholes-Merton option pricing model.  In order to correctly price over-the-counter vanilla options and many exotic derivatives such as binary options or variance swaps, a carefully curated smile model is crucial.  Given the implied volatilities of a finite number of options, we can distinguish three common smile modelling approaches in ascending order of technical sophistication:
\begin{enumerate}[leftmargin=*]
 \item Interpolation and extrapolation of the smile, for example with cubic splines \citep[see, e.g.,][]{Fengler2009};
 \item Parametric model fitting, such as the Stochastic Volatility Inspired (SVI) model \citep{Gatheral2004};
 \item Option pricing model calibration, such as the Stochastic Alpha Beta Rho (SABR) model \citep{Hagan2002}, the Heston  model \citep{Heston1993}, or the class of models known as Stochastic Volatility with Jumps (SVJ) \citep[see, e.g.,][]{Bates1996,Kou2002}, to name a few.
\end{enumerate}

The first two approaches are attractive in terms of computational speed, but dealing with no-arbitrage conditions for the resulting option prices can be treacherous.  In contrast, the third approach is arbitrage-free by construction, but numerical tractability is often an issue.  In this paper, we only consider the first two approaches.

It is important for a smile model to incorporate some practical rules about the  behavior of implied volatility as the underlying price  moves.  Practitioners tend to roughly agree on the ``sticky delta rule'' (see, e.g., \cite[p.~18]{Bossu2014} and \cite{Derman1999}), typically resulting in a higher total option delta than the Black-Scholes delta by way of the chain rule.  However, in practice, there is general agreement that optimal hedge ratios tend to be lower than the Black-Scholes delta \citep[see, e.g.,][]{Hull2017}.

A key aspect of designing a smile model is to predict  what the correct implied volatility should be as the underlying price moves, and thus what the correct option delta should be. In this paper, we study the resulting sensitivity of the fair value of variance, that is, the sensitivity of the \emph{fair strike} of the variance swap to changes in underlying asset price.  Emprirically, one would expect such sensitivity to be negative, i.e. as the underlying price drops, the variance swap should be more expensive.
Specifically, the fair value of variance is famously given by the \citet{Carr2001} spanning formula \citep[see also][]{Demeterfi1999}:
\begin{equation}\label{eq:fairvar}
 V^* = \frac2T\int_{0}^{F} \frac{\d K}{K^2} \,P(K) + \frac2T\int_{F}^{\infty} \frac{\d K}{K^2} \,C(K),
\end{equation}
where $V^*$ is the fair value of annualized variance to maturity $T$, $F$ is the underlying forward price, and $C(K)$ and $P(K)$ are \emph{undiscounted} European call and put values with strike $K$.  Volatility indexes such as the VIX and VSTOXX are computed using a discretized version of this formula.

A straightforward application of Leibniz's integral rule together with put-call parity then shows that the \textbf{variance swap delta} –-- the change in the fair value of variance resulting from an infinitesimal change in underlying forward price---can be represented as the weighted sum of call and put deltas:
\begin{equation}
 \frac{\d V^\ast}{\d F}=\frac2T\int_{0}^{F}\frac{\d K}{K^2} \,{\frac{\d P}{\d F}\left(K,F\right)}+\frac2T\int_{F}^{\infty}\frac{\d K}{K^2} \,{\frac{\d C}{\d F}\left(K,F\right)}.
\end{equation}
An equivalent concept is the price elasticity of the annualized variance swap rate $\sigma^* = \sqrt{V^*}$, which is defined as:
\[
    \lambda = \frac{F}{\sigma^\ast}\frac{\d\sigma^\ast}{\d F}=\frac{1}{2}\frac{F}{V^\ast}\frac{\d V^\ast}{\d F}.
\]

In this paper, we show that the variance swap total delta $\d V^*/\d F$ is identically zero for the very general class of smile models that are pure functions of (log) moneyness, which notably includes the SVI model.  This counter-intuitive property clearly indicates that this class of models does not take into account the empirical co-dependence between underlying price and volatility. We also propose a straightforward correction of the smile to produce a (locally) constant price elasticity of the variance swap rate at any desired level $\lambda$.  We show that for $\lambda<0$ the resulting ``power smile'' total delta is lower than the ``powerless'' delta, which may help produce better hedge ratios.

\subsection{Standing notations and assumptions}

For ease of exposition, throughout this paper all prices and values are taken to be forward to time horizon $T$, and all options expire at time $T$.  We respectively write $P_\BS(F,K,\varsigma)$ and $C_\BS(F,K,\varsigma)$ for the \emph{undiscounted} Black-Scholes put and call prices with strike price $K$, forward price $F$, and volatility $\varsigma$, that is,
\begin{align*}
    & C_\BS(F,K,\varsigma) = FN(d_1) - KN(d_2), \quad 
    P_\BS(F,K,\varsigma) = KN(-d_2) - FN(-d_1),
\end{align*}
where $ d_{1,2} = \frac{\ln(F/K)}{\varsigma\sqrt T} \pm \frac12 \varsigma\sqrt T $.  By put-call parity, the Black-Scholes vega is the same for calls and puts and is denoted as $\mathcal V = \partial P_\BS/\partial \varsigma = \partial C_\BS/\partial \varsigma$.  The standard normal density function is denoted as $n(z)$.  The implied volatility smile $\hat\sigma(K,F)$ is assumed to be a smooth (twice continuously differentiable) function of the strike price $K$ and forward price $F$, and such that the Carr-Madan spanning formula for fair variance is well defined and the conditions of differentiability under the integral sign are met.

\subsection{Related literature and organization of the paper} 

In the practitioner literature, \cite{coulombe:2009} derive a formula similar to equation~\eqref{eq:BS-delta-x}. \cite{eid:2011} examines the skew delta empirically, and concludes that it is zero (``A variance swap that did not yet strike [...] is only a pure variance product and
should not be hedged with any offsetting delta on the spot'').  A popular industry paper that provides a trader perspective on variance swap was also circulated by \cite{BossuStrasserGuichard2005}.

In the academic literature, the sensitivity of volatility to a change in the underlying price remains a popular topic of research. \cite{ContDaFonseca2002} study empirically the dynamics of implied volatility surfaces and their dependence on the underlying index, while \cite{DaglishHullSuo2007} examine alternative rules for the evolution of the volatility surface, including sticky-strike and sticky-delta dynamics. The dynamics of the implied volatility surface have also been studied in the context of stochastic volatility models and their applications to SPX and VIX derivatives; see, for example, \cite{Papanicolaou2022}. These contributions motivate our focus on the dependence of the implied volatility smile on the underlying forward price, which is the key determinant of the total delta of a variance swap.  More broadly, in recent literature the significance of the Carr-Madan spanning formula has been further underscored by the multi-asset extensions of \cite{cui-xu:2022}, and \cite{BossuCarrPapanicolaou2021,BossuCarrPapanicolaou2022,Bossu2022,BossuCrepeyNguyen2025}.

This paper is organized as follows.  In Section~\ref{sec:totalvsbs}, we prove that the variance swap total delta is zero for the class of smile models that are pure functions of log-moneyness, and we derive formulas for the gap between the Black-Scholes and total variance swap delta.  In Section~\ref{sec:powersmile}, we introduce the  ``power smile'', which is a simple correction to produce a nonzero total delta. Finally, in Section~\ref{sec:numerical}, we use historical market data to propose estimates of $\lambda$ for various maturities, and we compare the option deltas generated by the power smile versus other methods.  

\section{Total versus Black-Scholes delta of a variance swap}\label{sec:totalvsbs}

For fixed maturity $T$, let $\hat\sigma(K, F)$ be the smile as a smooth function of the strike price $K$ and the forward price $F$, and define $\sigma(x, F) = \hat\sigma(F e^x, F)$ as the smile represented as a function of log-moneyness $x = \ln(K/F)$ and forward price $F$.  The fair value of annualized variance is then given as
\begin{align}
    V^*(F)
    & =\frac{2}{T}\int_0^F \frac{\d K}{K^2} \, P_\BS\!\left(F, K,
   \hat\sigma(K,F)\right)
   +\frac{2}{T}\int_F^{+\infty} \frac{\d K}{K^2} \, C_\BS\!\left(F, K,
   \hat\sigma(K,F)\right)
   \label{eq:varswap-carrmadan-K}
   \\
   & = \frac{2}{T}\int_0^F \frac{\d K}{K^2} \, P_\BS\!\left(F, K,
   \sigma\!\left(\ln\frac KF,F\right)\!\right)
   +\frac{2}{T}\int_F^{+\infty} \frac{\d K}{K^2} \, C_\BS\!\left(F, K,
   \sigma\!\left(\ln\frac KF,F\right)\!\right).
   \label{eq:varswap-carrmadan-K-bis}
\end{align}

\subsection{Total delta}
Following the above definitions, the total delta of a variance swap is simply defined as the sensitivity $\d V^*/\d F$.  The proposition below shows that it may be calculated as a weighted sum of the smile sensitivity to changes in $F$.

\begin{proposition}\label{prop:totaldelta}
    The variance swap total delta is given by
    \[
        \frac{\d V^*}{\d F}=\frac{2}{\sqrt T}\int_{-\infty}^{+\infty} n(d_2(x, F)) \frac{\partial\sigma}{\partial F}(x,F)\,\d x,
\]
where $d_2(x, F) \coloneqq \frac{-x}{\sigma(x, F)\sqrt{T}}-\frac{1}{2}\sigma(x, F)\sqrt{T}$.
\end{proposition}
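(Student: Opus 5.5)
The plan is to remove the explicit dependence on $F$ from the integration bounds and from the Black--Scholes prices, so that $F$ enters $V^*$ only through the smile $\sigma(x,F)$. To do this, I start from the representation \eqref{eq:varswap-carrmadan-K-bis} and substitute $K = Fe^x$, i.e.\ $x=\ln(K/F)$. Then $\d K = Fe^x\,\d x$ and
\[
\frac{\d K}{K^2}=\frac{e^{-x}}{F}\,\d x .
\]
The bounds become fixed: $K\in(0,F)$ maps to $x\in(-\infty,0)$, and $K\in(F,\infty)$ maps to $x\in(0,\infty)$. This is the key point, because Leibniz's rule will then produce no boundary terms. (In the original $K$-form the boundary terms would anyway cancel by put--call parity at $K=F$.)

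Next I use the degree-one homogeneity of the undiscounted Black--Scholes formulas in $(F,K)$:
\[
C_\BS(F,Fe^x,\varsigma)=F\,c(x,\varsigma),\qquad P_\BS(F,Fe^x,\varsigma)=F\,p(x,\varsigma),
\]
where $c(x,\varsigma)=N(d_1)-e^xN(d_2)$, $p$ is defined analogously, and $d_{1,2}=\frac{-x}{\varsigma\sqrt T}\pm\frac12\varsigma\sqrt T$. The factor $F$ cancels against the $1/F$ in the measure, giving
\[
V^*(F)=\frac2T\int_{-\infty}^0 e^{-x}\,p\bigl(x,\sigma(x,F)\bigr)\,\d x+\frac2T\int_0^{\infty} e^{-x}\,c\bigl(x,\sigma(x,F)\bigr)\,\d x .
\]
In this form $F$ appears only inside $\sigma(x,F)$.

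I then differentiate under the integral sign, which is permitted by the standing assumptions, using the chain rule. This gives
\[
\frac{\d V^*}{\d F}=\frac2T\int_{-\infty}^{+\infty} e^{-x}\,\frac{\partial c}{\partial\varsigma}\bigl(x,\sigma(x,F)\bigr)\,\frac{\partial\sigma}{\partial F}(x,F)\,\d x,
\]
since the normalized put and call vegas coincide by put--call parity. It remains to compute the normalized vega. From $\mathcal V=K\,n(d_2)\sqrt T$ (equivalently $F\,n(d_1)\sqrt T$, using the identity $Fn(d_1)=Kn(d_2)$), we get $\partial c/\partial\varsigma=e^x\,n(d_2)\sqrt T$. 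The factor $e^x$ cancels $e^{-x}$, and $\frac2T\sqrt T=\frac2{\sqrt T}$, so the integrand reduces to $n(d_2(x,F))\,\partial\sigma/\partial F$ with $d_2(x,F)$ exactly as in the statement.

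The only delicate step is the bookkeeping in the change of variables. In particular, one must check that once the integrand is written in $x$, the Black--Scholes part depends on $F$ only through $\sigma$; this holds because $\ln(F/K)=-x$. Once that is verified, the rest is the chain rule plus the standard vega identity, and there is no genuine obstacle.
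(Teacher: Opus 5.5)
Your proposal is correct and follows essentially the same route as the paper. Both substitute $K=Fe^x$, use degree-one homogeneity so that $F$ enters only through $\sigma(x,F)$, differentiate under the integral, merge the identical put and call vegas, and simplify with $\mathcal V(1,e^x,\varsigma)=e^x n(d_2)\sqrt T$.
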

\begin{proof}
Starting from equation~\eqref{eq:varswap-carrmadan-K-bis} and substituting $K=Fe^x$ (with $x$ held fixed while differentiating in $F$), so that $\d K/K^2 = e^{-x}\,\d x/F$,
\[
V^*(F) = \frac{2}{T}\int_{-\infty}^0 \frac{e^{-x}}{F}\,P_\BS\bigl(F,Fe^x,\sigma(x,F)\bigr)\,\d x + \frac{2}{T}\int_0^{+\infty}\frac{e^{-x}}{F}\,C_\BS\bigl(F,Fe^x,\sigma(x,F)\bigr)\,\d x.
\]
By homogeneity of $P_\BS$ and $C_\BS$ in $(F,K)$, we have $P_\BS(F,Fe^x,\varsigma)=F\,P_\BS(1,e^x,\varsigma)$ and likewise for $C_\BS$. Substituting into the above and simplifying,
\begin{equation}\label{eq:v*}
    V^*(F) = \frac{2}{T}\int_{-\infty}^0 e^{-x}\,P_\BS\bigl(1,e^x,\sigma(x,F)\bigr)\,\d x + \frac{2}{T}\int_0^{+\infty}e^{-x}\,C_\BS\bigl(1,e^x,\sigma(x,F)\bigr)\,\d x.
\end{equation}
Differentiating under the integral sign by the chain rule, and using the fact that the Black-Scholes vega is identical for calls and puts to merge integrals,
\begin{equation}\label{eq:dvdf}
    \frac{\d V^*}{\d F} = \frac{2}{T}\int_{-\infty}^{+\infty} e^{-x}\,\mathcal V\bigl(1,e^x,\sigma(x,F)\bigr)\,\frac{\partial\sigma}{\partial F}(x,F)\,\d x.
\end{equation}
Substituting $\mathcal V(F,K,\varsigma)=Kn(d_2)\sqrt T$ for $F=1$, $K=e^x$ and then simplifying gives 
\[
\frac{\d V^*}{\d F} = \frac{2\sqrt T}{T}\int_{-\infty}^{+\infty} n(d_2(x,F))\,\frac{\partial\sigma}{\partial F}(x,F)\,\d x = \frac{2}{\sqrt T}\int_{-\infty}^{+\infty} n(d_2(x,F))\,\frac{\partial\sigma}{\partial F}(x,F)\,\d x,
\]
as claimed. \end{proof}

The following theorem shows that the class of smiles that are pure functions of log-moneyness is characterized by a variance swap total delta of zero.

\begin{theorem}\label{thm:deltazero}
Provided that the smile $\sigma(x,F)$ is monotonic in $F$, it is independent from $F$ if and only if the fair value of variance is independent from $F$, i.e. for all $h > -F$,
\[
    \sigma(x, F + h) = \sigma(x, F) \text{ for any }x\in\mathbb R
    \qquad\text{if and only if}\qquad
    V^*(F + h) = V^*(F).
\]
\end{theorem}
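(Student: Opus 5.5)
The forward direction (smile independent of $F$ $\Rightarrow$ fair variance independent of $F$) I would read directly off the representation \eqref{eq:v*} obtained in the proof of Proposition~\ref{prop:totaldelta}. There $V^*(F)$ depends on $F$ only through $\sigma(x,F)$, since the homogeneity of $P_\BS$ and $C_\BS$ has removed every explicit occurrence of $F$. So if $\sigma(x,F+h)=\sigma(x,F)$ for all $x$, the two integrands coincide pointwise and $V^*(F+h)=V^*(F)$. Equivalently, Proposition~\ref{prop:totaldelta} gives $\d V^*/\d F\equiv 0$ whenever $\partial\sigma/\partial F\equiv0$.

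For the converse I would use \eqref{eq:v*} together with monotonicity. Fix $F$ and $h>-F$, and suppose $V^*(F+h)=V^*(F)$. Subtracting the two versions of \eqref{eq:v*} gives
\[
0=\frac2T\int_{-\infty}^{0} e^{-x}\Bigl[P_\BS\bigl(1,e^x,\sigma(x,F+h)\bigr)-P_\BS\bigl(1,e^x,\sigma(x,F)\bigr)\Bigr]\d x+\frac2T\int_{0}^{+\infty} e^{-x}\Bigl[C_\BS\bigl(1,e^x,\sigma(x,F+h)\bigr)-C_\BS\bigl(1,e^x,\sigma(x,F)\bigr)\Bigr]\d x .
\]
Monotonicity of $\sigma$ in $F$ means that for each $x$ the sign of $\sigma(x,F+h)-\sigma(x,F)$ is determined by the sign of $h$. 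I read this as the same direction for all $x$ on the given interval, which is the interpretation needed. Because Black-Scholes prices are strictly increasing in volatility (vega $\mathcal V>0$), every bracket then has one common sign, and the weight $e^{-x}$ is positive. An integral of a one-signed continuous function that vanishes must vanish identically. Strict monotonicity of $\varsigma\mapsto P_\BS,C_\BS$ then yields $\sigma(x,F+h)=\sigma(x,F)$ for every $x$; continuity of $\sigma$ in $x$ handles the point $x=0$ and avoids any measure-zero issues.

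I would add a remark that, without monotonicity, $\partial\sigma/\partial F$ could change sign in $x$ and the weighted integral of Proposition~\ref{prop:totaldelta} could vanish without the smile being constant in $F$. This is why the hypothesis is required.

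The main obstacle is stating monotonicity precisely: it must mean that, for the fixed pair $(F,F+h)$, the difference $\sigma(x,F+h)-\sigma(x,F)$ does not change sign as $x$ varies. With only pointwise monotonicity in $F$ for each $x$ separately, the direction could depend on $x$ and the argument breaks. I would make this uniform-direction reading explicit at the start of the proof.
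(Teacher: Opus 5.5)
Your proof is correct. The forward direction matches the paper's, but for the converse you take a genuinely different route.

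\textbf{Forward direction.} The paper redoes the substitution $K\mapsto KF/(F+h)$ and invokes homogeneity at $F+h$. You cite the already-homogenized representation \eqref{eq:v*}, which packages the same computation.

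\textbf{Converse.} The paper assumes $V^*$ is constant in $F$ and differentiates under the integral sign. From \eqref{eq:dvdf} it obtains $\int e^{-x}\mathcal V(1,e^x,\sigma(x,F))\,\frac{\partial\sigma}{\partial F}(x,F)\,\d x=0$ for every $F$. Monotonicity makes this integrand one-signed, so $\partial\sigma/\partial F\equiv0$, which gives independence from $F$. You instead work with the finite difference $V^*(F+h)-V^*(F)$, and you replace positivity of vega by strict monotonicity of $\varsigma\mapsto P_\BS, C_\BS$.

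\textbf{What each route buys.} Yours is a two-point statement: equality of $V^*$ at the single pair $F,F+h$ already forces $\sigma(\cdot,F+h)=\sigma(\cdot,F)$. You therefore get the equivalence for each fixed $h$, not only under global constancy of $V^*$. You also need neither differentiability of $\sigma$ in $F$ nor differentiation under the integral sign. The paper's route ties the result directly to the total-delta formula of Proposition~\ref{prop:totaldelta}. That is, it shows that a one-signed $\partial\sigma/\partial F$ cannot give zero total delta unless it vanishes identically.

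Your explicit ``same direction for all $x$'' reading of monotonicity is exactly what the paper uses implicitly when it asserts that the integrand in \eqref{eq:dvdf} does not change sign.
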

\begin{corollary}
 If the smile is a pure function of (log) moneyness $x = \ln(K/F)$, the variance swap total delta is identically zero, i.e. $\d V^* / \d F = 0$.
\end{corollary}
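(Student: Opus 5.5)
The corollary follows directly from Proposition~\ref{prop:totaldelta}; Theorem~\ref{thm:deltazero} is not needed. Suppose the smile is a pure function of log-moneyness. Then there is a function $s$ with $\hat\sigma(K,F)=s(\ln(K/F))$, and so $\sigma(x,F)=\hat\sigma(Fe^x,F)=s(x)$ for every $F>0$. Since $x$ is held fixed while differentiating in $F$, as in the proof of Proposition~\ref{prop:totaldelta}, we get $\partial\sigma/\partial F(x,F)=0$ for every $x\in\mathbb R$.

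Substituting this into the integral representation of Proposition~\ref{prop:totaldelta}, the integrand $n(d_2(x,F))\,\partial\sigma/\partial F(x,F)$ vanishes identically. Hence
\[
\frac{\d V^*}{\d F}=\frac{2}{\sqrt T}\int_{-\infty}^{+\infty} n(d_2(x,F))\cdot 0\,\d x=0
\]
for every $F>0$. The proposition's hypotheses (differentiation under the integral sign and a well-defined spanning formula) are part of the standing assumptions, so they carry over unchanged.

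One can also argue at the level of the price itself. Equation~\eqref{eq:v*} expresses $V^*(F)$ as an integral of $e^{-x}P_\BS(1,e^x,s(x))$ and $e^{-x}C_\BS(1,e^x,s(x))$ over $x$. In this form $F$ no longer appears, so $V^*$ is constant in $F$ and its derivative is zero. This is the ``if'' direction of Theorem~\ref{thm:deltazero}, with the monotonicity hypothesis trivially satisfied because the smile is constant in $F$.

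The only point that needs care is the meaning of ``pure function of moneyness.'' It must mean that $F$ enters $\hat\sigma$ only through the ratio $K/F$, which translates exactly into $\sigma(x,F)$ being free of $F$. Using moneyness $K/F$ instead of $\ln(K/F)$ changes nothing, because the two are related by a fixed bijection that does not involve $F$. Once this translation is stated, the rest of the argument is immediate.
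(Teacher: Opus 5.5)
Your argument is correct, but your primary route differs slightly from the paper's.

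In the paper the corollary is read off from the $(\Rightarrow)$ direction of Theorem~\ref{thm:deltazero}. That direction proves $V^*(F+h)=V^*(F)$ for all $h>-F$ using the rescaling $K\mapsto KF/(F+h)$ and homogeneity of the Black--Scholes formulas. This is the same invariance you describe in your second paragraph via \eqref{eq:v*}. The appendix adds a further proof through Matytsin's formula.

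The two routes buy slightly different things:
\begin{itemize}
\item The invariance argument shows directly that $V^*$ is constant in $F$, without ever differentiating under the integral sign; the zero derivative is then automatic.
\item Your Proposition~\ref{prop:totaldelta} route is a one-line consequence of a formula already in hand. It relies on the standing differentiation-under-the-integral assumptions and yields $\d V^*/\d F=0$ pointwise. By the mean value theorem this is equivalent to constancy on $(0,\infty)$, so the only real difference is which regularity assumptions you invoke.
\end{itemize}

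One small terminological slip: ``smile free of $F$ $\Rightarrow$ $V^*$ free of $F$'' is the \emph{only if} direction of the theorem as stated (the paper's $(\Rightarrow)$), not the \emph{if} direction. The monotonicity hypothesis is not used in that direction at all, so there is no need to check it.
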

\begin{remark}
    The assumption that $\sigma(x,F)$ be monotonic in $F$ is reasonable in practice.  In equities, for fixed log-moneyness $x$, one typically expects a higher implied volatility in a bear market when $F$ is down, and a lower implied volatility in a bull market when $F$ is up. This behavior is not expected to reverse for some values of $F$, thereby ensuring monotonicity. Note that monotoncity in $F$ says nothing about the shape of the smile curve as a function log-moneyness $x = \ln(K/F)$, which is typically decreasing up to some strike near the money, and then increasing.
\end{remark}
\begin{proof}[Proof of Theorem~\ref{thm:deltazero}] $(\Rightarrow)$~Let $h > -F$.  Evaluating equation~\eqref{eq:varswap-carrmadan-K-bis} at $F + h$,
\begin{align*}
V^*(F+h)  =  \frac{2}{T}\int_0^{F+h} & \frac{\d K}{K^2} \, P_\BS\!\left(F+h, K,
\sigma\!\left(\ln\frac{K}{F+h}, F+h\right)\right)                                                                   \\
                                    & +
\frac{2}{T}\int_{F+h}^{+\infty} \frac{\d K}{K^2} \, C_\BS\!\left(F+h, K,
\sigma\!\left(\ln\frac{K}{F+h},F+h\right)\right).
\end{align*}
By linear substitution $K \mapsto K \frac{F}{F+h} $,
\begin{align*}
    V^*(F+h)
    & = \frac{2}{T}\int_0^F
        \frac{\d K}{K^2} \left(\frac{F}{F+h}\right)^2 P_\BS\!\left(F+h, K
        \frac{F+h}{F}, \sigma\!\left(\ln\frac{K}{F},F+h\right)\right) \frac{F+h}{F}
 \\
    & \quad +\frac{2}{T}\int_F^{+\infty}
    \frac{\d K}{K^2} \left(\frac{F}{F+h}\right)^2
    C_\BS\!\left(F+h, K \frac{F+h}{F},
    \sigma\!\left(\ln\frac{K}{F},F+h\right)\right) \frac{F+h}{F}.
\end{align*}
Simplifying and using the assumption that $\sigma(x, F+h) = \sigma(x,F)$,
\begin{align}
V^*(F+h) &= \frac{2}{T}\int_0^F \frac{\d K}{K^2} \, \frac{F}{F+h}
P_\BS\!\left(F+h, K \frac{F+h}{F}, \sigma\!\left(\ln\frac{K}{F},F\right)\right) \nonumber \\
 & \qquad  +\frac{2}{T}\int_F^{+\infty} \frac{\d K}{K^2} \,
\frac{F}{F+h} C_\BS\!\left(F+h, K \frac{F+h}{F},
\sigma\!\left(\ln\frac{K}{F},F\right)\right).
\label{eq:Fplush}
\end{align}
By homogeneity of the Black–Scholes pricing formulas, for any scalar $\lambda>0$, we have $\lambda P_\BS(F, K, \varsigma)=P_\BS(\lambda F, \lambda K, \varsigma)$ and similarly for $C_\BS$.  Substituting into equation~\eqref{eq:Fplush} with $\lambda=\frac{F}{F+h}$ and simplifying,
\[
V^*(F+h)=\frac{2}{T}\int_0^F \frac{\d K}{K^2} \, P_\BS\!\left(F, K,
\sigma\!\left(\ln\frac{K}{F},F\right)\right)
+\frac{2}{T}\int_F^{+\infty} \frac{\d K}{K^2} \, C_\BS\!\left(F, K,
\sigma\!\left(\ln\frac{K}{F},F\right)\right)
\]
which is $V^*(F)$ as claimed.

\smallskip
\noindent $(\Leftarrow)$ By equation \eqref{eq:dvdf}, together with the assumption that $V^*(F)$ is constant, we obtain for every $F>0$:
\[
    0=\frac{\d V^*}{\d F}=\frac{2}{T}\int_{-\infty}^{+\infty} e^{-x}\mathcal{V}\bigl(1,e^x,\sigma(x,F)\bigr) \frac{\partial\sigma}{\partial F}(x,F)\,\d x.
\]
Because $\sigma(x, F)$ is assumed to be monotonic in $F$, and because the factor $e^{-x}\mathcal{V}\bigl(1,e^x,\sigma(x,F)\bigr)$ is always positive, the integrand above does not change sign.  Since the integral vanishes, the integrand must also be identically zero, whence
\[
\frac{\partial\sigma}{\partial F}(x,F)=0 \qquad \text{for all } F>0,\; x\in\mathbb{R},
\]
thereby proving that $\sigma(x,F)$ does not depend on $F$.
\end{proof}

As explained in introduction, this is a counter-intuitive result, as one expects volatility to increase when the underlying price decreases. An alternative proof is provided in Appendix~\ref{sec:appendixa}. In Section~\ref{sec:svitest}, we verified this theoretical prediction numerically for the SVI implied volatility smile model calibrated to market data.

\subsection{Variance swap Black-Scholes delta}

As an alternative to the total delta, it may instead be of interest to compute the variance swap Black-Scholes delta which is defined as
\begin{align}
    \delta_{\BS }
    & = \frac2T\int_{0}^{F}{\frac{\d K}{K^2}}\,{\frac{\partial P_\BS}{\partial F}\left(F,K,\hat\sigma(K,F)\right)}+\frac2T\int_{F}^{\infty}{\frac{\d K}{K^2}}\,{\frac{\partial C_\BS}{\partial F} \left(F,K,\hat\sigma(K,F)\right)}
    \label{eq:BS-delta}
    \\
    & = \frac2T\int_{0}^{F}{\frac{\d K}{K^2}}\,{\frac{\partial P_\BS}{\partial F}\!\left(F,K,\sigma\!\left(\ln\frac KF,F\right)\right)}+\frac2T\int_{F}^{\infty}{\frac{\d K}{K^2}}\,{\frac{\partial C_\BS}{\partial F}\!\left(F,K,\sigma\!\left(\ln\frac KF,F\right)\right)}
    \label{eq:BS-delta-log}
\end{align}
where $\frac{\partial P_\BS}{\partial F}, \frac{\partial C_\BS}{\partial F}$ respectively denote the Black-Scholes put and call deltas.
\begin{proposition} \label{prop:varswap-delta-gap}
  The gap between the Black-Scholes and total variance swap deltas is given by
  \begin{align}      
    \delta_{\BS } - \frac{\d V^*}{\d F}
    & = -\frac 2T\int_{-\infty}^\infty \frac{\d K}{K^2}\,\mathcal{V}(F,K,\hat\sigma(K,F))\frac{\partial\hat\sigma}{\partial F}(K,F)
    \label{eq:BS-total-gap} \\
    & = \frac{2}{\sqrt{T}}\int_{-\infty}^{\infty} n\big(d_2(x,F)\big)
\left[\frac{1}{F}\frac{\partial \sigma}{\partial x}(x,F) - \frac{\partial \sigma}{\partial F}(x,F)\right]\d x,
    \label{eq:bs-delta-x}
  \end{align}
where, as before, $\sigma(x,F) \coloneqq \hat\sigma(Fe^x,F)$ is the smile as a function of log-moneyness $x=\ln(K/F)$, and $d_2(x,F) \coloneqq \frac{-x}{\sigma(x,F)\sqrt{T}}-\tfrac{1}{2}\sigma(x,F)\sqrt{T}$.
\end{proposition}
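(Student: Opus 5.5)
The plan is to compute $\d V^*/\d F$ directly from the strike-space representation \eqref{eq:varswap-carrmadan-K} using Leibniz's rule, and compare the result term by term with the definition \eqref{eq:BS-delta} of $\delta_\BS$. That gives \eqref{eq:BS-total-gap}. Then I will change variables to log-moneyness and use the chain rule relating $\hat\sigma$ and $\sigma$ to obtain \eqref{eq:bs-delta-x}.

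\textbf{Step 1: the strike-space formula.} Differentiate \eqref{eq:varswap-carrmadan-K} in $F$. The variable limits contribute the boundary terms
\[
\frac{2}{T}\frac{1}{F^2}\Bigl[P_\BS(F,F,\hat\sigma(F,F)) - C_\BS(F,F,\hat\sigma(F,F))\Bigr],
\]
which vanish because put-call parity gives $C_\BS-P_\BS=F-K=0$ at $K=F$. Inside the integrals, the total derivative of $P_\BS(F,K,\hat\sigma(K,F))$ in $F$ is
\[
\frac{\partial P_\BS}{\partial F}+\mathcal V\,\frac{\partial\hat\sigma}{\partial F}
\]
by the chain rule, and likewise for calls. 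The first pieces reassemble exactly into $\delta_\BS$. Because the vega is the same for calls and puts, the second pieces merge into a single integral over all strikes $K\in(0,\infty)$, which is what the integral in \eqref{eq:BS-total-gap} ranges over. Hence
\[
\frac{\d V^*}{\d F}=\delta_\BS+\frac2T\int \frac{\d K}{K^2}\,\mathcal V\,\frac{\partial\hat\sigma}{\partial F}(K,F),
\]
and rearranging gives \eqref{eq:BS-total-gap}.

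\textbf{Step 2: change to log-moneyness.} Substitute $K=Fe^x$ with $F$ fixed, so that $\d K/K^2=\d x/K$. Use the vega identity $\mathcal V(F,K,\varsigma)=K\,n(d_2)\sqrt T$, already used in the proof of Proposition~\ref{prop:totaldelta}. At $K=Fe^x$ we have $\ln(F/K)=-x$, so $d_2$ coincides with $d_2(x,F)$ as defined in the statement. The factors of $K$ cancel, and the right-hand side of \eqref{eq:BS-total-gap} becomes
\[
-\frac{2}{\sqrt T}\int_{-\infty}^{\infty} n(d_2(x,F))\,\frac{\partial\hat\sigma}{\partial F}(Fe^x,F)\,\d x.
\]

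\textbf{Step 3: chain rule between $\hat\sigma$ and $\sigma$.} I expect this to be the only delicate step: one must distinguish the partial derivative of $\hat\sigma$ in $F$ with $K$ held fixed from that of $\sigma$ with $x$ held fixed. Differentiating $\sigma(x,F)=\hat\sigma(Fe^x,F)$ gives
\[
\frac{\partial\sigma}{\partial x}=Fe^x\,\frac{\partial\hat\sigma}{\partial K},\qquad
\frac{\partial\sigma}{\partial F}=e^x\,\frac{\partial\hat\sigma}{\partial K}+\frac{\partial\hat\sigma}{\partial F}.
\]
Eliminating $\partial\hat\sigma/\partial K$ yields
\[
\frac{\partial\hat\sigma}{\partial F}(Fe^x,F)=\frac{\partial\sigma}{\partial F}(x,F)-\frac1F\frac{\partial\sigma}{\partial x}(x,F).
\]
Substituting this into the integral of Step 2 absorbs the leading minus sign and gives exactly \eqref{eq:bs-delta-x}.

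As a consistency check, combining \eqref{eq:bs-delta-x} with Proposition~\ref{prop:totaldelta} gives
\[
\delta_\BS=\frac{2}{F\sqrt T}\int_{-\infty}^{\infty} n(d_2)\,\frac{\partial\sigma}{\partial x}\,\d x.
\]
This depends only on the $x$-shape of the smile, as one expects for a delta computed with the smile held fixed in strike.
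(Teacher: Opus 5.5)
Your proof is correct and takes essentially the same route as the paper. You apply Leibniz's rule to the strike-space formula, kill the boundary terms by put-call parity and merge the vega terms, then substitute $K=Fe^x$ and use the chain-rule identity $\frac{\partial\hat\sigma}{\partial F}(Fe^x,F)=\frac{\partial\sigma}{\partial F}(x,F)-\frac1F\frac{\partial\sigma}{\partial x}(x,F)$. The only cosmetic difference is that you use $\mathcal V(F,K,\varsigma)=K\,n(d_2)\sqrt T$ directly instead of first reducing to $F=1$ by homogeneity, and your closing consistency check correctly recovers, in slightly more general form, the paper's corollary formula for $\delta_{\BS}$.
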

\begin{corollary}
    If the smile $\sigma(x,F)$ only depends on (log) moneyness $x$, the variance swap Black-Scholes delta simplifies to
  \begin{equation}  \label{eq:BS-delta-x}
   \delta_{\BS } = \frac 2{F\sqrt{T}}\int_{-\infty}^\infty n\!\left(d_2(x)\right)\sigma'(x)\,\d x,
  \end{equation}
  wherein we dropped the dependence on $F$ in $\sigma(x)$ and $d_2(x)$.
\end{corollary}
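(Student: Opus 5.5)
The corollary is a direct specialization of Proposition~\ref{prop:varswap-delta-gap} together with Theorem~\ref{thm:deltazero} (or its corollary). Assume the smile depends only on log-moneyness, so $\sigma(x,F)=\sigma(x)$. I will combine three facts:
\begin{enumerate}
 \item $\partial\sigma/\partial F\equiv 0$;
 \item the total delta vanishes;
 \item the gap formula \eqref{eq:bs-delta-x} isolates $\delta_\BS$.
\end{enumerate}

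\textbf{Step 1.} The corollary to Theorem~\ref{thm:deltazero} gives $\d V^*/\d F=0$. This also follows directly from Proposition~\ref{prop:totaldelta}, since there the integrand carries the factor $\partial\sigma/\partial F$, which vanishes identically. So the left-hand side of \eqref{eq:bs-delta-x} reduces to $\delta_\BS$.

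\textbf{Step 2.} On the right-hand side of \eqref{eq:bs-delta-x}, the term $-\partial\sigma/\partial F$ drops out. Since $1/F$ does not depend on $x$, it can be pulled out of the integral, leaving
\[
\delta_\BS=\frac{2}{F\sqrt T}\int_{-\infty}^{\infty} n(d_2(x))\,\sigma'(x)\,\d x .
\]
Here $d_2(x)=-x/(\sigma(x)\sqrt T)-\tfrac12\sigma(x)\sqrt T$ no longer depends on $F$.

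\textbf{Main obstacle.} All the substantive work sits in Proposition~\ref{prop:varswap-delta-gap}, which may be assumed. If the corollary had to be derived independently, the key point would be the chain rule for $\hat\sigma(K,F)=\sigma(\ln(K/F))$, namely $\partial\hat\sigma/\partial F=-\sigma'(x)/F$. Substituting this into \eqref{eq:BS-total-gap} and changing variables $K=Fe^x$, with $\mathcal V(F,Fe^x,\varsigma)=Fe^x n(d_2)\sqrt T$ and $\d K/K^2=e^{-x}\d x/F$, gives the same expression.

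The one point to watch is the sign. The minus sign in \eqref{eq:BS-total-gap} cancels the minus sign in $\partial\hat\sigma/\partial F$, so the result is positive and matches \eqref{eq:BS-delta-x}. Integrability is covered by the standing assumptions.
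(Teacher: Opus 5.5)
Your proposal is correct and follows essentially the same route as the paper. The paper gets the corollary by setting $\partial\sigma/\partial F\equiv 0$ in \eqref{eq:bs-delta-x}, noting that the total delta then vanishes (Theorem~\ref{thm:deltazero} and its corollary, or directly Proposition~\ref{prop:totaldelta}), and pulling the constant $1/F$ out of the integral; your cross-check via \eqref{eq:BS-total-gap} with $\partial\hat\sigma/\partial F=-\sigma'(x)/F$ is also consistent, including the sign.
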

\begin{remark}
    If the smile $\sigma(x)$ is a pure downward-sloping function of (log) moneyness, then the total delta is zero and, by equation~\eqref{eq:bs-delta-x}, the Black-Scholes delta is negative and thus lower than the total delta.  If, on the other hand, the smile $\hat\sigma(K,F)$ is decreasing in $F$ (i.e. goes up when the market $F$ goes down, consistently with empirical observation), by equation~\eqref{eq:BS-total-gap} the gap is positive and the Black-Scholes variance swap delta is higher than the total delta.
\end{remark}
\begin{remark}
    In general, $n(d_2(x)) = e^{-[d_2(x)]^2/2}/\sqrt{2\pi}$ is not a probability density because the implied volatility $\sigma(x)$ appears in $d_2(x)$ and varies with the integration variable $x$.  One notable exception is when the smile $\sigma(x) \equiv \sigma$ is flat, in which case $n(d_2(x))/\sigma\sqrt T$ is a density, but observe that in this case all measures of variance swap delta vanish.  \end{remark}
\begin{proof}[Proof of Proposition~\ref{prop:varswap-delta-gap}]
Apply the Leibniz rule to $V^*(F)$ as defined in equation~\eqref{eq:varswap-carrmadan-K}. The integrals there have a moving boundary at $K=F$, which contributes boundary terms $\frac{2}{F^2}[P_\BS(F,F,\hat\sigma(F,F)) - C_\BS(F,F,\hat\sigma(F,F))]$ that vanish by put-call parity. Differentiating the integrands by the chain rule,
 \begin{align*}
  \frac{\d V^*}{\d F}
   & = \frac2T\int_0^F \frac{\d K}{K^2}
  \left[\frac{\partial P_\BS}{\partial F}(F,K,\hat\sigma(K,F)) + \frac{\partial\hat\sigma}{\partial F}(K,F)\, \mathcal{V}(F,K,\hat\sigma(K,F))\right] \\
   & \quad + \frac2T\int_F^\infty \frac{\d K}{K^2}
  \left[\frac{\partial C_\BS}{\partial F}(F,K,\hat\sigma(K,F)) + \frac{\partial\hat\sigma}{\partial F}(K,F)\, \mathcal{V}(F,K,\hat\sigma(K,F))\right].
 \end{align*}
Splitting each integrand into its two summands and rearranging,
\begin{align*}
 \frac{\d V^*}{\d F}
  =\ &\frac2T\int_0^F \frac{\d K}{K^2}\frac{\partial P_\BS}{\partial F}(F,K,\hat\sigma(K,F)) + \frac2T\int_F^\infty \frac{\d K}{K^2}\frac{\partial C_\BS}{\partial F}(F,K,\hat\sigma(K,F))\\
  & + \frac2T\int_0^F \frac{\d K}{K^2}\,\frac{\partial\hat\sigma}{\partial F}(K,F)\,\mathcal{V}(F,K,\hat\sigma(K,F))
  + \frac2T\int_F^\infty \frac{\d K}{K^2}\,\frac{\partial\hat\sigma}{\partial F}(K,F)\,\mathcal{V}(F,K,\hat\sigma(K,F)).
\end{align*}
By equation~\eqref{eq:BS-delta}, the first two terms correspond to $\delta_{\BS }$. By put-call parity  the vega $\mathcal{V}$ is identical for both puts and calls and the last two integrals merge into a single integral over $K\in(0,\infty)$:
\[
 \frac{\d V^*}{\d F} = \delta_{\BS } + \frac2T\int_0^\infty \frac{\d K}{K^2}\,\mathcal{V}(F,K,\hat\sigma(K,F))\,\frac{\partial\hat\sigma}{\partial F}(K,F),
\]
which is equation \eqref{eq:BS-total-gap} after rearranging.
Next, by change of variable $K = F e^x$ applied to equation~\eqref{eq:BS-total-gap}, together with homogeneity of vega,
\[
    \delta_{\BS } -\frac{\d V^*}{\d F} = -\frac2T\int_{-\infty}^\infty e^{-x}\,\mathcal{V}(1,e^x,\hat\sigma(Fe^x,F))\,\frac{\partial\hat\sigma}{\partial F}(F e^x,F).
\]
By the chain rule applied to $\hat\sigma(K,F) = \sigma(\ln(K/F),F)$,
\[
    \delta_{\BS } -\frac{\d V^*}{\d F} =\frac2T\int_{-\infty}^\infty e^{-x}\,\mathcal{V}(1,e^x,\hat\sigma(Fe^x,F))\left[ \frac1F\frac{\partial\sigma}{\partial x}(x,F) - \frac{\partial\sigma}{\partial F}(x,F)\right]. 
\]
Substituting $\mathcal{V}(1,e^x,\hat\sigma(Fe^x,F)) = e^x n(d_2(x,F))\sqrt T$ and simplifying, we obtain equation~\eqref{eq:bs-delta-x} as required.

\end{proof}

\section{Power Smile}\label{sec:powersmile}

\newcommand{\bolt}{  \begin{tikzpicture}[baseline=(X.base), x=0.8ex, y=0.8ex]
        \node[inner sep=0pt, outer sep=0pt] (X) at (0,0) {\vphantom{H}};
        \fill[black] (0.8, 1.8) -- (0.1, 0.9) -- (0.5, 0.9) -- (0.2, 0.0) -- (0.9, 0.9) -- (0.5, 0.9) -- cycle;
  \end{tikzpicture}}

As shown in Theorem~\ref{thm:deltazero}, any smile model $\sigma(x)$ that is a pure function of log-moneyness $x = \ln(K/F)$ produces a variance swap total delta of zero, which contradicts the empirical observation that implied volatility is up whenever the underlying price drops.  In this section, we propose the following simple adjustment to produce a constant price elasticity of the variance swap rate:
\begin{equation}\label{power_smile}
 \sigma\bolt{} (x,F) = \sigma(x) \left(\frac F{F_0}\right)^\lambda,
\end{equation}
where $F_0$ is the initial price, $F$ is the spot price, and $\lambda$ is a parameter corresponding to the price elasticity of the variance swap rate.  It is worth emphasizing that Equation \eqref{power_smile} is local in nature and requires values of $F$ to be sufficiently close to $F_0$.  In practice, bounding the multiplication factor $(F/F_0 )^\lambda$ within a range is required to avoid nonsensical results.

\begin{proposition}\label{prop:elasticity}
Under the power smile equation~\eqref{power_smile}, the following hold for $F \approx F_0$:
\begin{enumerate}[leftmargin=*,label=(\alph*)]
    \item The fair value of variance satisfies
     \[ V^*(F) \approx  \left(\frac{F}{F_0}\right)^{2\lambda} V^*(F_0).
     \]
    \item The variance swap total delta is
     \[ \frac{\d V^*}{\d F} \approx \frac{2\lambda}{F}\,V^*(F_0). \]
    \item The price elasticity of the variance swap rate $\sigma^* = \sqrt{V^*}$ is constant, i.e.
     \begin{equation}\label{eq:elasticity}
      \frac{F}{\sigma^*}\cdot\frac{\d\sigma^*}{\d F} \approx \lambda.
     \end{equation}
    \item For a call or put option price $f(F,K) \coloneqq f_\BS(F,K,\sigma\bolt{}(\ln \frac KF,F))$, the option power delta is 
    \begin{equation}    \label{eq:option-power-delta}
        \frac{\d f}{\d F} = \frac{\partial f_\BS}{\partial F} + \frac{F^{\lambda-1}}{F_0^\lambda}\left[ \lambda\,\sigma\!\left(\ln\frac KF\right) 
        - \sigma'\!\left(\ln\frac KF\right) \right] \frac{\partial f_\BS}{\partial\varsigma}. 
    \end{equation}
    \end{enumerate}
\end{proposition}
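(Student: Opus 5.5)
The plan is to work from the log-moneyness representation \eqref{eq:v*} established in the proof of Proposition~\ref{prop:totaldelta}. There $V^*(F)$ is written as an integral over $x$ of $e^{-x}P_\BS(1,e^x,\sigma(x,F))$ and $e^{-x}C_\BS(1,e^x,\sigma(x,F))$. Under the power smile \eqref{power_smile}, $\sigma\bolt{}(x,F)=\sigma(x)\,m$ with the scalar multiplier $m\coloneqq(F/F_0)^\lambda$. Hence
\[
V^*(F)=\Phi(m), \qquad \Phi(m)\coloneqq \frac2T\int_{-\infty}^0 e^{-x}P_\BS\bigl(1,e^x,m\,\sigma(x)\bigr)\,\d x+\frac2T\int_0^{\infty} e^{-x}C_\BS\bigl(1,e^x,m\,\sigma(x)\bigr)\,\d x ,
\]
and $\Phi(1)=V^*(F_0)$, because at $F=F_0$ the smile reduces to $\sigma(x)$.

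For (a), I would show $\Phi(m)\approx m^2\Phi(1)$ for $m\approx1$. The heuristic is that fair variance is quadratic in a uniform scaling of the volatility level: for a flat smile $\Phi(m)=m^2\sigma^2$ exactly. The first-order check is to compute
\[
\Phi'(1)=\frac{2}{\sqrt T}\int n(d_2(x))\,\sigma(x)\,\d x
\]
using the vega substitution from Proposition~\ref{prop:totaldelta}. I then need this to equal $2\Phi(1)$, the derivative of $m^2\Phi(1)$ at $m=1$.

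This identity is the main obstacle: it is exact for a flat smile but for a general smile holds only approximately. My first attempt would be to integrate by parts in $x$ in $\Phi(1)$, using $\partial_x$ of the normalized put/call prices, to see whether the remainder involves $\sigma'(x)$ and is small near the money. Failing that, I would state it as a first-order approximation exact for flat smiles, consistent with the ``$\approx$'' and the local nature stressed after \eqref{power_smile}. Granting $\Phi(m)\approx m^2\Phi(1)$, part (a) follows with $m^2=(F/F_0)^{2\lambda}$.

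Parts (b) and (c) then follow by differentiating the approximation in (a). For (b), differentiating gives $\d V^*/\d F\approx 2\lambda F^{2\lambda-1}F_0^{-2\lambda}V^*(F_0)$, which reduces to $\frac{2\lambda}{F}V^*(F_0)$ since $F\approx F_0$. For (c), take $\sigma^*\approx (F/F_0)^{\lambda}\sqrt{V^*(F_0)}$; its logarithmic derivative $F\,\d\ln\sigma^*/\d F$ is exactly $\lambda$. Alternatively, I can plug $\partial\sigma\bolt{}/\partial F=\lambda\sigma\bolt{}/F$ into Proposition~\ref{prop:totaldelta} to obtain (b) as $\frac{\lambda}{F}\Phi'(m)$, which again needs the same identity $\Phi'\approx 2\Phi$.

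Part (d) is an exact chain-rule computation. Since $f(F,K)=f_\BS\bigl(F,K,\sigma(\ln\frac KF)(F/F_0)^\lambda\bigr)$, we have
\[
\frac{\d f}{\d F}=\frac{\partial f_\BS}{\partial F}+\frac{\partial f_\BS}{\partial\varsigma}\,\frac{\d}{\d F}\Bigl[\sigma\bigl(\ln\tfrac KF\bigr)(F/F_0)^\lambda\Bigr].
\]
The bracket's derivative is $-\frac1F\sigma'(\cdot)(F/F_0)^\lambda+\lambda F^{\lambda-1}F_0^{-\lambda}\sigma(\cdot)$. Factoring out $F^{\lambda-1}/F_0^\lambda$ gives \eqref{eq:option-power-delta}.
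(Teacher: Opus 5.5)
Your parts (b), (c) and (d) coincide with the paper's. Parts (b) and (c) come from differentiating the approximation in (a), and (d) is the same chain-rule computation.

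Part (a) takes a genuinely different route. Instead of the log-moneyness Carr--Madan form from the proof of Proposition~\ref{prop:totaldelta}, the paper uses the representation of Lemma~\ref{lem:carr-lee-matytsin}, in which $\sigma^2$ appears as an explicit factor. With $m=(F/F_0)^\lambda$ this gives the exact factorization $V^*(F)=m^2 I(m)$, where $I(m)=\int n(d_2^m(x))\,\sigma^2(x)\,(-\partial_x d_2^m(x))\,\mathrm{d}x$ and $d_2^m$ is $d_2$ computed with volatility $m\,\sigma(x)$. The paper then lets $F\to F_0$ under the integral, so that $I(m)\to V^*(F_0)$. Your integration-by-parts idea is essentially how that representation is derived, so pursuing it would lead you to the paper's factorization. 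That factorization explains the $m^2$ scaling structurally, which your derivative check does not.

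The obstacle you flag is real, and the paper's argument does not remove it. The limit $I(m)\to I(1)$ only gives (a) to zeroth order. Differentiating that approximation in (b) tacitly treats $I'(1)$ as zero. Since $\Phi(m)=m^2I(m)$, we have $I'(1)=\Phi'(1)-2\Phi(1)$, which is exactly your identity. Using Lemma~\ref{lem:carr-lee-matytsin} for $\Phi(1)$ and your vega formula for $\Phi'(1)$, one finds
\[
\Phi'(1)-2\Phi(1)=\frac{2}{\sqrt T}\int_{-\infty}^{\infty} n\big(d_2(x)\big)\,\sigma'(x)\Big(x-\tfrac12\sigma^2(x)\,T\Big)\,\mathrm{d}x .
\]
This vanishes for a flat smile, as you anticipated. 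For a linear skew $\sigma(x)=\sigma_0+sx$ with small $s$, it is about $-2s\sigma_0^3T$, which is a relative correction of about $-s\sigma_0T$ to the elasticity in (c). So stating (a)--(c) as first-order approximations that are exact only for flat smiles is as rigorous as the paper, and more explicit about where the error lies.

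One small slip: by the chain rule, $\mathrm{d}V^*/\mathrm{d}F=\Phi'(m)\,\lambda m/F$, not $\frac{\lambda}{F}\Phi'(m)$. The two agree only at $m=1$.
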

\begin{corollary}
    For $\lambda<0$, the option power delta at $F=F_0$  is lower than the ``powerless'' total delta $ \frac{\d}{\d F} f_\BS(F,K,\sigma(\ln \frac KF)) = \frac{\partial f_\BS}{\partial F} 
        - \frac1F \sigma'\!\left(\ln\frac KF\right) \frac{\partial f_\BS}{\partial\varsigma} $.
\end{corollary}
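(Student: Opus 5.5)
The corollary follows by comparing formula~\eqref{eq:option-power-delta} of Proposition~\ref{prop:elasticity}(d), evaluated at $F=F_0$, with the stated ``powerless'' total delta. Both are obtained from the chain rule applied to $f_\BS(F,K,\cdot)$ evaluated along a smile in log-moneyness. First we check the powerless formula itself: differentiating $f_\BS(F,K,\sigma(\ln\frac KF))$ in $F$ gives $\frac{\partial f_\BS}{\partial F} + \sigma'(\ln\frac KF)\cdot\frac{\partial}{\partial F}\ln\frac KF\cdot\frac{\partial f_\BS}{\partial\varsigma}$, and $\frac{\partial}{\partial F}\ln\frac KF=-\frac1F$. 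This yields the displayed expression.

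Next, at $F=F_0$ the prefactor in \eqref{eq:option-power-delta} is $F_0^{\lambda-1}/F_0^{\lambda}=1/F_0$. The power delta therefore reads
\[
\frac{\partial f_\BS}{\partial F} + \frac{\lambda}{F_0}\,\sigma\!\left(\ln\tfrac K{F_0}\right)\frac{\partial f_\BS}{\partial\varsigma} - \frac1{F_0}\,\sigma'\!\left(\ln\tfrac K{F_0}\right)\frac{\partial f_\BS}{\partial\varsigma}.
\]
One point needs care: we must confirm that both deltas have the same Black-Scholes delta and vega terms, i.e.\ that both are evaluated at the same volatility. This holds because $\sigma\bolt{}(x,F_0)=\sigma(x)$, so at $F=F_0$ the two models coincide in price, Black-Scholes delta and vega. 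Subtracting the powerless delta, the difference is
\[
\frac{\lambda}{F_0}\,\sigma\!\left(\ln\tfrac K{F_0}\right)\mathcal V .
\]

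Finally, we sign this difference. Implied volatility satisfies $\sigma>0$, the vega $\mathcal V=Kn(d_2)\sqrt T>0$, $F_0>0$, and $\lambda<0$. The product is therefore strictly negative, so the power delta is strictly lower than the powerless delta, for calls and puts alike since the vega is common. There is no real obstacle here. The only subtlety is the identification of the volatility argument at $F=F_0$ noted above; away from $F_0$ the factor $(F/F_0)^\lambda$ would make the two vegas differ and the comparison would no longer be immediate.
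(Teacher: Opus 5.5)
Your proof is correct and is the argument the paper intends. The paper states this corollary without a separate proof, as an immediate consequence of Proposition~\ref{prop:elasticity}(d): at $F=F_0$ the prefactor becomes $1/F_0$, the two deltas share the same Black-Scholes delta and vega, and the difference $\frac{\lambda}{F_0}\,\sigma\!\left(\ln\frac{K}{F_0}\right)\mathcal V$ is negative for $\lambda<0$, exactly as you show.
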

\begin{proof}[Proof of Proposition~\ref{prop:elasticity}]
(a)~Substituting $ \sigma\bolt{}(x,F) = \sigma(x)(F/F_0)^\lambda $ into equation \eqref{eq:carr-lee-polishchuk}, and factoring $(F/F_0)^{2\lambda}$ out of the integral,
\[
    V^*(F) = \left(\frac F{F_0}\right)^{2\lambda} \int_{-\infty}^{\infty}n\big(d_2{\!\!\bolt{}}(x, F)\big)\sigma^2(x)\Big(-\frac{\partial d_2{\!\!\bolt{}}}{\partial x}(x,F)\Big)\d x,
\]
wherein \( \displaystyle
  d_2{\!\!\bolt{}}(x,F) \coloneqq \frac{-x-(F/F_0)^{2\lambda}\sigma^2(x) T/2}{(F/F_0)^{\lambda}\sigma(x)\sqrt T}.
\)
In the limit as $F \to F_0$, we have
\(
  \displaystyle  d_2{\!\!\bolt}(x,F) \to d_2(x) \coloneqq \frac{-x-\sigma^2(x) T/2}{\sigma(x)\sqrt T}.
\)
Taking the limit under the integral,
\[
    \int_{-\infty}^{\infty}n\big(d_2{\!\!\bolt{}}(x, F)\big)\sigma^2(x)\Big(-\frac{\partial d_2{\!\!\bolt{}}}{\partial x}(x,F)\Big)\d x
    \xlongrightarrow[F\to F_0]{}
    \int_{-\infty}^{\infty}n\big(d_2(x)\big)\sigma^2(x)\big(-d_2'(x)\big)\d x
    = V^*(F_0).
\]
Hence $ V^*(F) \approx \left(\frac F{F_0}\right)^{2\lambda} V^*(F_0)$ for $F \approx F_0$ as claimed.

\noindent (b)~Differentiating the local approximation of part~(a) directly with respect to $F$,\[
 \frac{\d V^*}{\d F} \approx \frac{\d}{\d F}\left[\left(\frac F{F_0}\right)^{2\lambda}V^*(F_0)\right]
 = \frac{2\lambda}{F}\left(\frac F{F_0}\right)^{2\lambda}V^*(F_0)
 \approx \frac{2\lambda}{F}\,V^*(F_0).
\]

\noindent (c)~By the chain rule, $ \displaystyle \frac{\d\sigma^*}{\d F} = \frac{1}{2\sigma^*}\frac{\d V^*}{\d F} $.  By part (b), for $F \approx F_0$,
\[
\frac{\d\sigma^*}{\d F} \approx \frac{1}{2\sigma^*}\cdot\frac{2\lambda}{F}V^*(F_0) = \frac{\lambda\sigma^*}{F},
\]
thereby proving that $\frac{F}{\sigma^*}\cdot\frac{\d\sigma^*}{\d F} \approx \lambda$.

\noindent (d)~By the chain rule,
$ \displaystyle
\frac{\d f_{\BS}}{\d F} = \frac{\partial f_{\BS}}{\partial F}+  \frac{\partial f_\BS}{\partial\varsigma} \, \frac{ \d \sigma \bolt{}}{\d F},
$
and \[
\frac{\d\sigma \bolt{}}{\d F}
= \frac{\partial \sigma{\bolt{}}}{\partial x}  \frac {\partial x }{\partial F} + \frac{\partial \sigma{\bolt{}}}{\partial F}
= -\frac{\sigma'(x)}{F} \left( \frac{F}{F_0}\right)^\lambda + \sigma(x) \frac{\lambda F^{\lambda-1}}{F_0^\lambda}.
\]
Substituting and simplifying yields the desired result. 
\end{proof}

\section{Numerical results}\label{sec:numerical}

\subsection{Numerical validation of Theorem~\ref{thm:deltazero}}\label{sec:svitest}
To numerically validate the results presented in Theorem~\ref{thm:deltazero}, we calibrated the SVI model \citep{Gatheral2004} to the SPX option chain observed on August 20, 2025 at 12:36pm EST, for the September 19, 2025 expiry ($T=0.082192$ years, i.e. about 1 month). The spot price was $S_0=6395.56$, with dividend yield $q=1.6\%$ and risk-free rate $r=4.8\%$, giving a forward price $F_0=S_0e^{(r-q)T}=6412.45$. 

Writing the SVI curve as a function of log-moneyness as
\[
 \sigma_{\mathrm{SVI}}(x) = \sqrt{a + b \left[ \rho\, (x-m) + \sqrt{(x - m)^2 + s^2} \right]},
\]
we determine the parameters $a$, $b$, $\rho$, $m$, and $s$ that minimize the total-variance squared error objective
\[
 \sum_{i=1}^{n} \left[ a+b\Big(\rho\,(x_i-m)+\sqrt{(x_i-m)^2+s^2}\Big) - \sigma_{\mathrm{Market},i}^{2} \right]^2,
\]
subject to $|\rho|<0.999$, $0<b<2$, $0.0001<s<2$, $|a|<1$, $|m|<1$.
The parameters were obtained via constrained least-squares optimization with 60 random restarts to mitigate the non-convexity of the SVI objective, retaining the best feasible fit under the no-arbitrage constraints. Table~\ref{tab:calibrated_params} reports the calibrated SVI parameters. Figure~\ref{fig:svifit} plots the calibrated SVI curve (blue) against the quoted SPX implied volatilities (red) as a function of log-moneyness; the close visual agreement between the two, together with the volatility relative RMSE of $5.2\%$, indicates that the SVI parametrization captures the shape of the market smile well across the full range of quoted strikes.

\begin{table}[h]
\centering
\begin{tabular}{>{\bf}lccccc}
\toprule
Parameter & $a$ & $b$ & $\rho$ & $s$ & $m$ \\
\midrule
Value & $-0.065314$ & $0.135658$ & $0.102604$ & $0.490019$ & $0.111212$ \\
\bottomrule
\end{tabular}
\caption{SVI calibration for the SPX smile on August 20, 2025.}
\label{tab:calibrated_params}
\end{table}

After calibrating the SVI parameters, we used numerical integration to compute the fair price of variance \( V^*(F) \) as a function of the forward price $F$:
\[
 V^*(F) = \frac{2}{T} \int_0^F \frac{\d K}{K^2} \,P_\BS\left(F, K, \sigma_{\mathrm{SVI}}\left(\ln\frac{K}{F}\right)\right) + \frac{2}{T} \int_F^{+\infty} \frac{\d K}{K^2} \, C_\BS\left(F, K, \sigma_{\mathrm{SVI}}\left(\ln\frac{K}{F}\right)\right).
\]
The resulting plot of $V^*(F)$, shown in Figure~\ref{fig:plot1}, displays a completely flat curve at $V^*\approx0.026303$ across $F\in(0,10000)$, with no discernible numerical noise. This corresponds to a fair variance swap rate $\sqrt{V^*}\approx16.21\%$ which is broadly consistent with the VIX which ranged between 15.57 and 17.19. This behavior confirms the theoretical result of Theorem~\ref{thm:deltazero}: under a volatility smile model that depends on log-moneyness only, the fair price of variance \( V^* \) is independent of the forward price.
\vspace{0cm}
\begin{figure}[H]
    \centering
    \includegraphics[width=\textwidth]{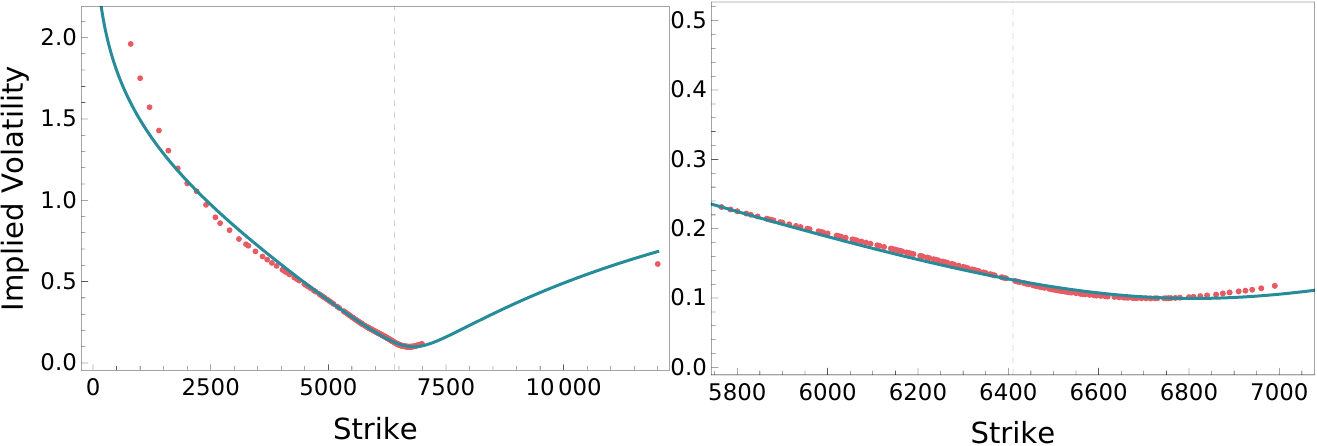}
    \caption{Calibrated SVI implied volatility smile (blue) against market implied volatilities (red) as a function of strike, for the SPX smile of August 20, 2025. The right panel is a detailed view for stikes near the forward price.}
    \label{fig:svifit}
\end{figure}

\begin{figure}[h]
 \centering
 \includegraphics[scale=0.7]{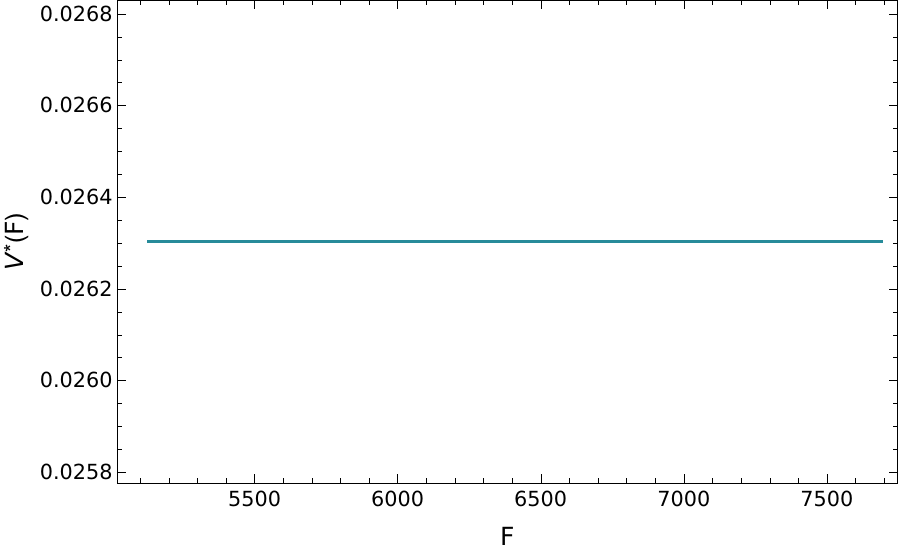}
 \caption{Plot of \( V^*(F) \) using the SVI volatility calibrated to the SPX smile of August 20, 2025 ($T=1$ month, September 19, 2025 expiry).}
 \label{fig:plot1}
\end{figure}

\subsection{Estimating $\lambda$}

Using the results from previous sections, we can estimate the value of
$\lambda$ from historical data. By Proposition~\ref{prop:elasticity}(c), the elasticity relation \eqref{eq:elasticity} can be written in return form as
\(
\frac{\d\sigma^*}{\sigma^*}=\lambda\,\frac{\d F}{F},
\)
i.e., the instantaneous return of the variance swap rate $\sigma^*$ is $\lambda$ times the instantaneous return of the underlying forward $F$. This suggests estimating $\lambda$, for each maturity, using a zero-intercept linear regression $Y = \lambda X + \varepsilon$ of variance swap rate returns $Y$ against index returns $X$. We used historical data of S\&P 500 index and variance swaps from 2008 to 2025.  Figure \ref{fig:lambdaregression} shows the regression  variance swap returns against index returns at different maturities, and Table \ref{t1} shows the estimates of $\lambda$ for various maturities.

\begin{figure}[p]
 \centering
 \includegraphics[scale=1]{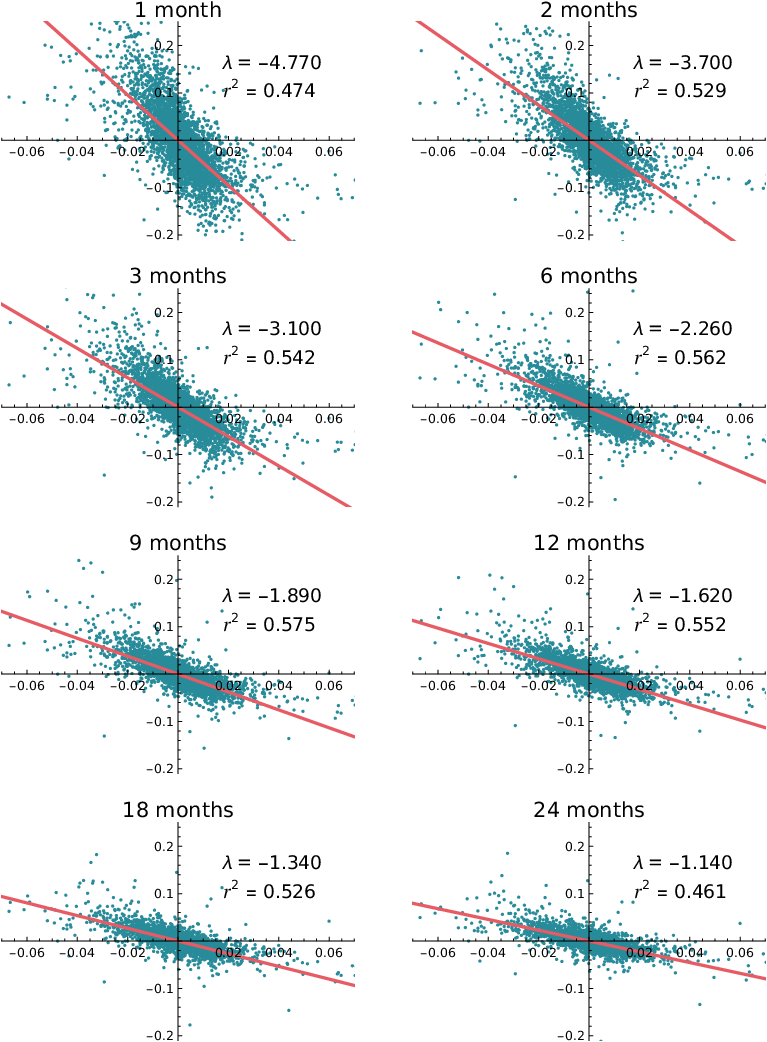}
 \caption{Regression of S\&P 500 daily returns ($x$ axis) vs variance swap returns ($y$ axis)
  for various maturities during the period from 2008 to 2025}
 \label{fig:lambdaregression}
\end{figure}

\begin{table}[h]
\setlength{\tabcolsep}{1.5pt}   \centering
 \begin{tabular}{|l|*{8}{S[table-format=-4.2]|}}
  \hline
  \text{\bf Term (mo)} & {1} & {2} & {3} & {6} & {9} & {12} & {18} & {24} \\
  \hline
  \text{\bf Elasticity } $\lambda$ & -4.76 & -3.69 & -3.09 & -2.26 & -1.89 & -1.61 & -1.34 & -1.13 \\
  \hline
  \text{\bf R-squared} & 0.47 & 0.53 & 0.54 & 0.56 & 0.57 & 0.53 & 0.46 & 0.55 \\
  \hline
  \text{\bf Std. error} & 0.07 & 0.05 & 0.04 & 0.03 & 0.02 & 0.02 & 0.02 & 0.02 \\
  \hline
  \text{\bf F-statistic} & 3903.81 & 4879.56 & 5135.24 & 5562.71 & 5361.64 & 4835.34 & 3723.50 & 5872.21 \\
  \hline
 \end{tabular}
 \caption{Estimates of $\lambda$ for various maturities (2008--2025).}
 \label{t1}
\end{table}

As expected $\lambda$ is negative with an upward sloping term structure. The $R^2$ ranges between 0.46 and 0.57 across all maturities, indicating a moderate fit: daily index returns account for roughly half of the variation in daily variance-swap-rate returns, with the rest left to other factors. The standard error, which measures the sampling variability of the estimated $\lambda$, i.e., how much the coefficient estimate would be expected to fluctuate across repeated samples, and thus how precisely $\lambda$ is pinned down by the data, is small relative to $\lambda$ across all maturities, so the elasticity is estimated fairly precisely throughout.

The F-statistic tests the null hypothesis that the regression has no explanatory power, i.e.\ that $\lambda = 0$. As a rule of thumb, an F-statistic well above $1$ indicates the regression explains meaningfully more variance than noise, while a value close to or below $1$ indicates the model explains no more variance than would be expected by chance. The F-statistic remains consistently large across all maturities, well above the threshold for evidence against the null in every case. This confirms that the regression fit is significant throughout.
\newpage

It is worth noting that the elasticity $\lambda$ appears to be
decreasing proportionally with $1/\sqrt T$ where $T$ is the time to
maturity of the swap. Figure \ref{fig:sqrtcourvefit} shows the values
of $\lambda$ from table \ref{t1} plotted along with $\eta/\sqrt T$,
where $\eta$ was estimated to be $\eta\approx-1.476$ by classical least-squares curve fitting techniques.

\vspace{-0.7cm}
\begin{figure}[H]
 \centering
 \includegraphics[scale=0.7]{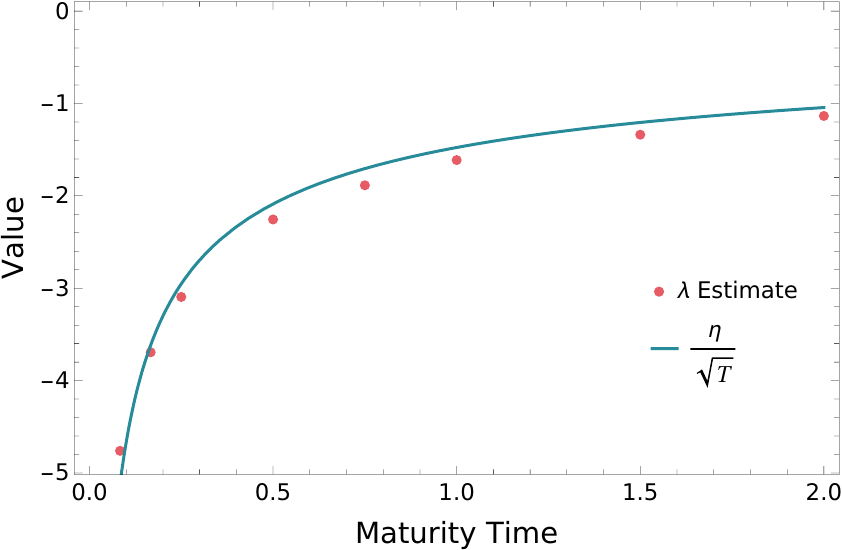}
 \caption{Elasticity of variance swap rate $\lambda$ at different
  maturities (2008--2025). }
 \label{fig:sqrtcourvefit}
\end{figure}

On a different time period, namely from 2022 to 2025, we performed a
similar regression of variance swap returns against index returns, this
time including a 1-day term variance swap. The results are shown in Table \ref{t1b}.

\begin{table}[h]
\setlength{\tabcolsep}{1.5pt}   \centering
 \begin{tabular}{|>{\bf}l|*{8}{S[table-format=-4.2]|}}
  \hline
  \text{Term} & {\text{1 day}} & {\text{1mo}} & {\text{3mo}} & {\text{6mo}} & {\text{9mo}} & {\text{12mo}} & {\text{18mo}} & {\text{24mo}} \\
  \hline
  Elasticity     & -8.43  & -5.04  & -3.47   & -2.56   & -2.07   & -1.75   & -1.38   & -1.13  \\ \hline
  R-squared      & 0.11   & 0.50   & 0.56    & 0.58    & 0.59    & 0.59    & 0.58    & 0.53   \\ \hline
  Std. error & 0.83   & 0.18   & 0.11    & 0.08    & 0.06    & 0.05    & 0.04    & 0.04   \\ \hline
  F-statistic    & 103.43 & 813.67 & 1045.70 & 1132.40 & 1191.75 & 1170.14 & 1142.46 & 934.69 \\ \hline
 \end{tabular}
 \caption{Estimates of $\lambda$ for various maturities (2022--2025).}
 \label{t1b}
\end{table}

Similar to Table \ref{t1}, the $\lambda$ estimates exhibit an upward sloping term structure. The $R^2$ is in the 0.5--0.6 range for maturities of 1 month through 24 months, indicating a reasonably strong fit at these maturities: daily index returns explain a substantial share of the variation in daily returns of the variance swap rate over this shorter, more recent period. 
The standard error remains small relative to $\lambda$, so the elasticity estimates are precise throughout. The F-statistic remains well above the threshold for evidence against the null across all maturities.
The picture is different for the 1-day maturity: the $R^2$ of 0.11 reflects noise dominating the signal at short horizons. This is echoed by the standard error, which at $0.83$ is by far the largest across maturities, both in absolute terms and relative to $\lambda=-8.43$ itself (about $10\%$ of $|\lambda|$, versus roughly $3$--$5\%$ at the other maturities), confirming that the elasticity is estimated markedly less precisely at this horizon. Nonetheless, the F-statistic of $103.43$ confirms that $\lambda$ remains statistically significant, i.e.\ genuinely nonzero, despite this lower precision.

Remarkably, the $1/\sqrt T$ behavior is again confirmed for the 2022--2025 period, as shown in Figure \ref{fig:sqrtcourvefit2}, 
where the values of $\lambda$ are plotted against $\eta/\sqrt T$
where $\eta \approx -1.59$ is chosen to minimize the mean absolute error.

\begin{figure}[H]
 \centering
 \includegraphics[scale=0.7]{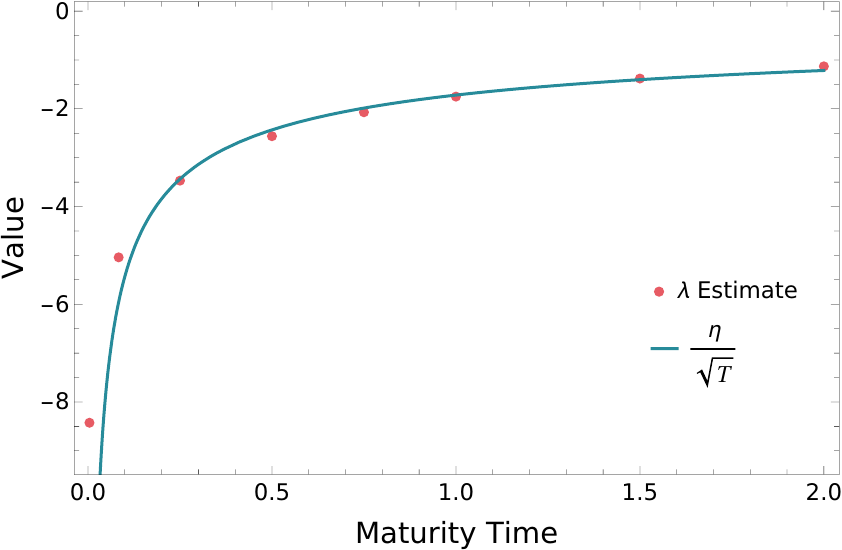}
 \caption{Elasticity of variance swap rate $\lambda$ at different
  maturities (2022--2025). }
 \label{fig:sqrtcourvefit2}
\end{figure}

\vspace{-0.2cm}
We also studied the evolution of $\lambda$ over time by performing a
one year
rolling window regression of variance swap returns against index
returns. The evolution of these $\lambda$ estimates are shown in
Figure \ref{fig:rollinglambda}, and the average values, standard
deviations, average $R^2$, skewness and kurtosis of these rolling
estimates are shown in Table \ref{table:t2}.

\begin{figure}[h]
 \centering
 \includegraphics[scale=0.7]{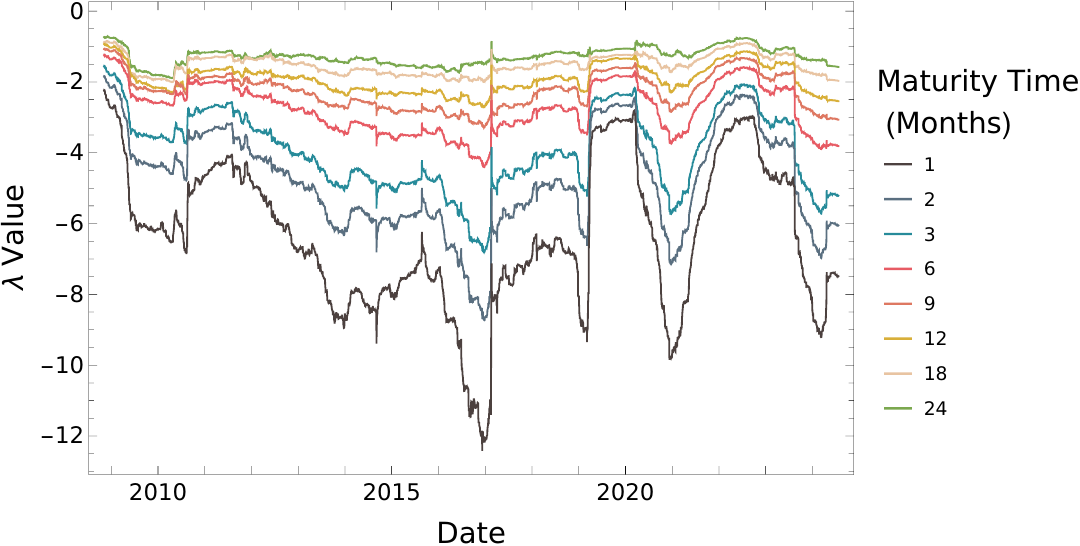}
 \vspace{0.3cm}
 \caption{Rolling estimate of $\lambda$ for various variance swap maturities (2008--2025).}
 \label{fig:rollinglambda}
\end{figure}

\begin{table}[h]
 \centering
 \begin{tabular}{|>{\bf}l|*{8}{S[table-format=-1.3]|}}
  \hline
  \text{Term (months)} & {1} & {2} & {3} & {6} & {9} & {12} & {18} & {24} \\ \hline
  Average $\lambda$ & -6.364 & -4.786 & -3.934 & -2.756 & -2.254 & -1.916 & -1.514 & -1.275 \\ \hline
  Std Deviation & 2.081 & 1.447 & 1.127 & 0.703 & 0.504 & 0.399 & 0.283 & 0.243 \\ \hline
  Average $R^2$ & 0.549 & 0.6 & 0.609 & 0.613 & 0.618 & 0.595 & 0.56 & 0.514 \\ \hline
  Skewness & -0.181 & -0.153 & -0.096 & -0.01 & 0.182 & 0.342 & 0.265 & -0.038 \\ \hline
  Kurtosis & -0.307 & -0.465 & -0.692 & -0.929 & -0.829 & -0.739 & -0.749 & -0.147 \\ \hline
 \end{tabular}
 \caption{Statistics of rolling estimates of $\lambda$ (2008--2025).}
 \label{table:t2}
\end{table}

\newpage
The mean rolling estimates follow the same decreasing-magnitude-with-maturity pattern as the full-sample estimates in Table \ref{t1}, though each rolling mean is somewhat more negative than the corresponding full-sample value. The standard deviation of the rolling $\lambda$ estimates decreases monotonically with maturity, indicating that the elasticity estimate is more stable across rolling windows at longer maturities than at shorter ones. Average $R^2$ is slightly higher throughout than the full-sample $R^2$ in Table \ref{t1}, which confirms that the moderate explanatory power of the regression is stable across rolling windows rather than driven by a few years of data. Skewness is small and changes sign across maturities, indicating the rolling $\lambda$ estimates are close to symmetrically distributed around their mean at every maturity. Kurtosis is negative at every maturity; negative kurtosis means the distribution of rolling estimates is flatter than a normal distribution, i.e., less prone to extreme outlying values.

The $\eta/\sqrt{T}$ behavior observed in Figure~\ref{fig:sqrtcourvefit} also appears to hold for the rolling estimates of $\lambda$, as shown in Figure~\ref{fig:rollinglambda2A}, where each value of $\lambda$ is scaled by $\sqrt{T}$. Figure \ref{fig:rollinglambda2B} shows the evolution of the resulting $\eta$, obtained by curve-fitting the rolling $\lambda$ estimates within each one-year window to $\eta/\sqrt{T}$.
Table~\ref{table:t3} shows summary statistics of these rolling $\eta$ estimates. The average rolling $\eta$ of $-1.898$ is close to the full-sample value $\eta\approx-1.476$, indicating that the $1/\sqrt{T}$ term-structure relationship is broadly stable over time rather than an artifact of the full-sample fit. 
The standard deviation  is modest relative to the mean, indicating that the rolling $\eta$ estimates do not vary excessively across one-year windows.
Skewness is small in magnitude, indicating that the rolling $\eta$ estimates are close to symmetrically distributed around their mean, while the negative kurtosis indicates a flatter-than-normal distribution with fewer extreme outliers.
Finally, the average $R^2$ of $0.964$ confirms that  the $1/\sqrt{T}$ relationship across maturities fits the rolling $\lambda$ estimates extremely well within each window.

\begin{figure}[H]
 \begin{center}
  \begin{subfigure}{.8\linewidth}  \centering
 \vspace{0.3cm}
 \includegraphics[scale=0.75]{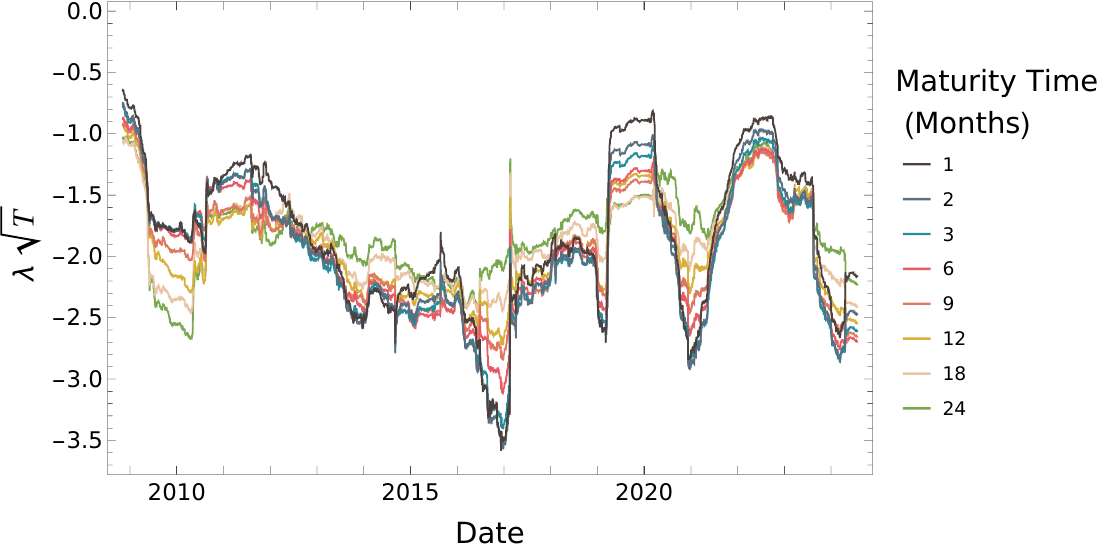}
 \caption{Rolling estimate of $\lambda\sqrt{T}$  (2008--2025).}
 \vspace{0.5cm}
 \label{fig:rollinglambda2A}
  \end{subfigure}
 \end{center}
 \begin{subfigure}{.8\linewidth} \hspace{0.1cm}
 \includegraphics[scale=0.75]{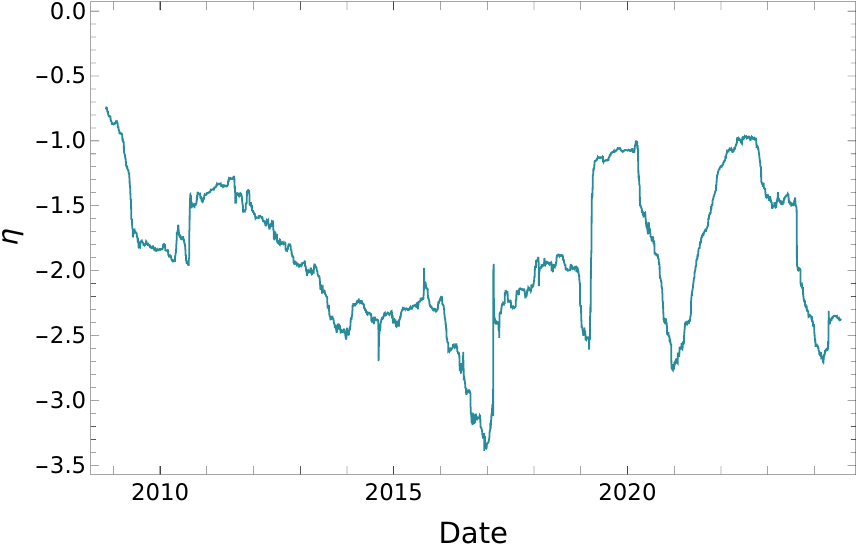}
 \vspace{0.3cm}
 \caption{Rolling estimate of $\eta$ (2008--2025).}
 \label{fig:rollinglambda2B}
 \end{subfigure}
 \vspace{0.3cm}
 \caption{Rolling estimates of $\lambda\sqrt{T}$ and of $\eta$, obtained by fitting $\lambda\approx\eta/\sqrt{T}$ to the rolling $\lambda$ estimates of Figure~\ref{fig:rollinglambda} in each one-year window, over the period 2008--2025.}
 \label{fig:rollinglambda2}
\end{figure}

\begin{table}[h]
 \centering
 \begin{tabular}{|>{\bf}l|S[table-format=-1.3]|}
  \hline
  Average $\eta$ & -1.898 \\ \hline
  Std. deviation & 0.552 \\ \hline
  Average $R^2$ & 0.964 \\ \hline
  Skewness & -0.081 \\ \hline
  Kurtosis & -0.528 \\ \hline
 \end{tabular}
 \vspace{0.3cm}
 \caption{Statistics of the rolling estimates of $\eta$ (2008--2025).}
 \label{table:t3}
\end{table}

\newpage\subsection{Option power delta}

In addition to the 1-month SVI fit of Section~\ref{sec:svitest}, we also performed a 12-month smile calibration as of August 20, 2025.  Table~\ref{tab:svi1y} reports the calibrated SVI parameters for the 12-month maturity. The resulting at-the-money-forward volatility was 15.6\%, and the fit quality, a relative RMSE of 2.1\% against the market implied volatilities, indicates a good match to the observed smile.
\vspace{0.5cm}
\begin{table}[h]
\centering
\begin{tabular}{>{\bf}lccccc}
\toprule
Parameter & $a$ & $b$ & $\rho$ & $s$ & $m$ \\
\midrule
Value & $-0.131170$ & $0.249869$ & $-0.108580$ & $0.581558$ & $0.167150$ \\
\bottomrule
\end{tabular}
\caption{SVI calibration for the 12-month SPX smile, August 20, 2025.}
\label{tab:svi1y}
\end{table}

Based on the 1-month and 12-month SVI fits, we calculated the hedge ratios generated by three methods: Black-Scholes delta, SVI total delta, and power smile delta, using relevant $\lambda$ estimates from Table~\ref{t1} ($\lambda = -1.61$ for 12-month and $\lambda = -4.76$ for 1-month) and Table~\ref{t1b} ($\lambda = -1.75$ for 12-month and $\lambda = -5.04$ for 1-month). Figure~\ref{fig:calldeltas} plots our results for 12-month and 1-month call options as functions of log-moneyness $x = \ln(K/F)$. We can see that the ``power delta'' is always lower than the ``powerless'' total delta, as predicted in Section~\ref{sec:powersmile}, and in line with the Black-Scholes delta.  Figure~\ref{fig:calldeltas2} repeats this hedge ratio computation using the full-sample (2008--2025) $\lambda$ estimates of Table~\ref{t1}.

As mentioned in introduction, there is empirical evidence that hedge ratios should be lower than the Black-Scholes delta.  This may be achieved using lower values of $\lambda$, for example the mean value minus two standard deviations based on Table~\ref{table:t2} rolling estimate statistics.  This would give $\lambda = -1.916 - 2\times0.399 = -2.71$ for 12-month and $\lambda = -6.364 - 2\times 2.081 = -10.53$ for 1-month.  Figure  \ref{fig:calldeltas3} shows the corresponding results.  We can see that this methodology indeed results in a power delta that is lower than the Black-Scholes delta.

\newpage
\begin{figure}[H]
 \centering
 
 \begin{subfigure}{.8\linewidth}  \centering
  \includegraphics[width=\textwidth]{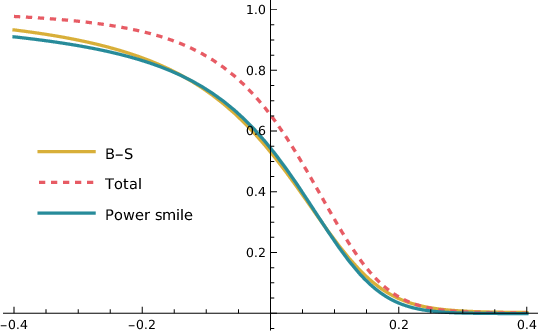}    \caption{12-month calls, $\lambda = -1.75$ from Table \ref{t1b}}
 \end{subfigure}
 \begin{subfigure}{.8\linewidth}  \centering
  \includegraphics[width=\textwidth]{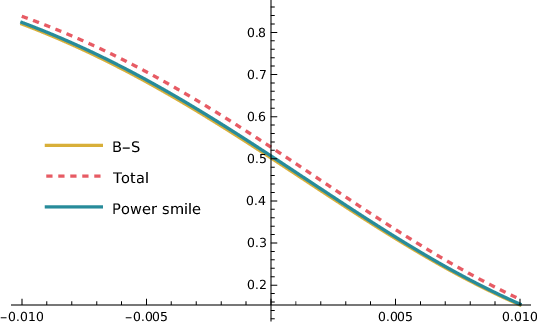}    \caption{1-month calls, $\lambda = -5.04$ from Table \ref{t1b}}
 \end{subfigure}
  \caption{Initial call deltas as functions of log-moneyness $x = \ln(K/F)$ }
  \label{fig:calldeltas}
 \begin{flushleft}
    \footnotesize     \textbf{Note:} Unlike call delta plots by spot price commonly found in the option literature, here each point on a curve is the delta  of a different call option.  The first quadrant are deltas of out-of-the-money calls and the second quadrant are deltas of in-the-money calls.
 \end{flushleft}
\end{figure}

\begin{figure}[H]
 \centering

 \begin{subfigure}{.8\linewidth}  \centering
  \includegraphics[width=\textwidth]{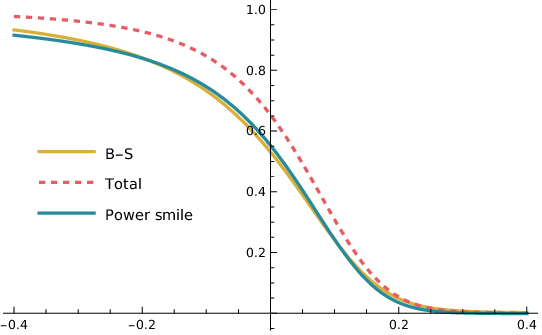}    \caption{12-month calls $\lambda = -1.61$ from Table \ref{t1}}
 \end{subfigure}
 \begin{subfigure}{.8\linewidth}  \centering
  \includegraphics[width=\textwidth]{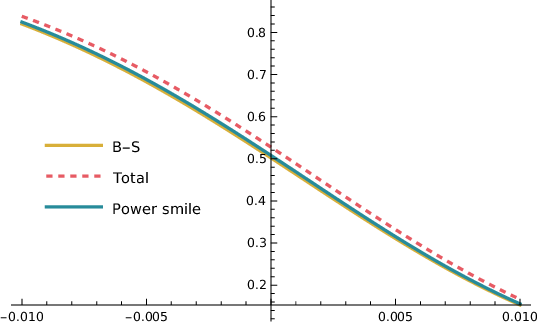}    \caption{1-month calls, $\lambda = -4.76$ from Table \ref{t1}}
 \end{subfigure}
  \caption{Initial call deltas as functions of log-moneyness $x = \ln(K/F)$}
   \label{fig:calldeltas2}
 \begin{flushleft}
    \footnotesize     \textbf{Note:} Unlike call delta plots by spot price commonly found in the option literature, here each point on a curve is the delta of a different call option.  The first quadrant are deltas of out-of-the-money calls and the second quadrant are deltas of in-the-money calls.
 \end{flushleft}
\end{figure}

\begin{figure}[H]
 \centering

 \begin{subfigure}{.8\linewidth}  \centering
  \includegraphics[width=\textwidth]{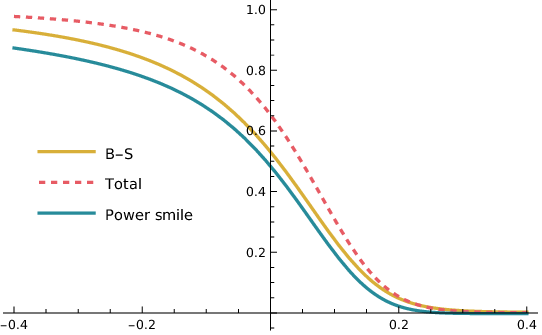}    \caption{12-month calls $\lambda = -2.71$ using Table \ref{table:t2} }
 \end{subfigure}
 \begin{subfigure}{.8\linewidth}  \centering
  \includegraphics[width=\textwidth]{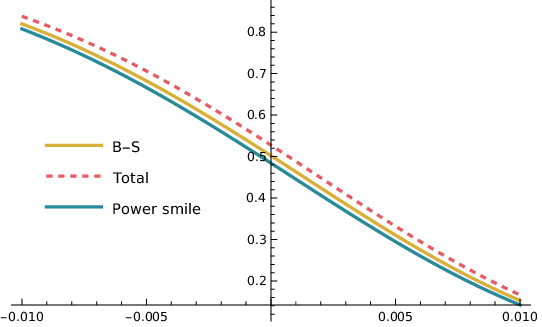}    \caption{1-month calls, $\lambda = -10.53$ using Table \ref{table:t2}}
 \end{subfigure}
  \caption{Initial call deltas as functions of log-moneyness $x = \ln(K/F)$}
   \label{fig:calldeltas3}
 \begin{flushleft}
    \footnotesize     \textbf{Note:} Unlike call delta plots by spot price commonly found in the option literature, here each point on a curve is the delta of a different call option.  The first quadrant are deltas of out-of-the-money calls and the second quadrant are deltas of in-the-money calls.
 \end{flushleft}
\end{figure}

\appendix

\renewcommand{\thesection}{\Alph{section}}
\renewcommand{\theequation}{\thesection\arabic{equation}}

\theoremstyle{theorem}
\newtheorem{lemma}{Lemma}[section]
\expandafter\let\csname corollary\endcsname\relax
\expandafter\let\csname c@corollary\endcsname\relax
\newtheorem{corollary}[lemma]{Corollary}
\expandafter\let\csname remark\endcsname\relax
\expandafter\let\csname c@remark\endcsname\relax
\theoremstyle{remark}
\newtheorem{remark}[lemma]{Remark}

\newpage
\section{Matytsin's formula}\label{sec:appendixa}

\begin{lemma}\label{lem:carr-lee-matytsin}
The fair value of variance may be calculated as
\begin{equation}\label{eq:carr-lee-polishchuk}
  V^*(F) = \int_{-\infty}^{\infty}n\big(d_2(x,F)\big)\sigma^2(x, F)\left(-\frac{\partial d_2}{\partial x}(x,F)\right)\d x,
  \quad d_2(x,F) = \frac{-x-\sigma^2(x, F) T/2}{\sigma(x,F) \sqrt T}.
 \end{equation}
\end{lemma}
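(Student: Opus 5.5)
We start from the normalized log-moneyness representation \eqref{eq:v*} obtained in the proof of Proposition~\ref{prop:totaldelta}:
\[
V^*(F) = \frac{2}{T}\int_{-\infty}^0 e^{-x}\,P_\BS\bigl(1,e^x,\sigma(x,F)\bigr)\,\d x + \frac{2}{T}\int_0^{+\infty}e^{-x}\,C_\BS\bigl(1,e^x,\sigma(x,F)\bigr)\,\d x .
\]
Throughout, $F$ is a frozen parameter, so we drop it and write $\sigma(x)$, $d_2(x)$ and $d_1(x)=d_2(x)+\sigma(x)\sqrt T$. The plan is to integrate by parts twice in $x$, moving all derivatives off the option prices. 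This mirrors the classical Matytsin/Carr--Lee argument: Carr--Madan weights are converted into a density-type weight against implied variance.

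\textbf{Step 1: total derivative of the normalized prices.} For $f\in\{P,C\}$ write $g(x)=f_\BS(1,e^x,\sigma(x))$. By the chain rule,
\[
g'(x)=e^x\,\partial_K f_\BS + \mathcal V\,\sigma'(x).
\]
Use $\partial_K C_\BS=-N(d_2)$, $\partial_K P_\BS=N(-d_2)$ and $\mathcal V(1,e^x,\sigma)=e^x n(d_2)\sqrt T$. A cleaner route is the identity $\frac{\d}{\d x}N(d_1)=n(d_1)d_1'$ together with $n(d_1)=e^x n(d_2)$ (valid at $F=1$). These give the compact formulas
\[
\frac{\d}{\d x}\bigl[e^{-x}C\bigr] = -e^{-x}N(d_1)+ n(d_2)\sqrt T\,\sigma'(x),
\]
and similarly for puts with $+e^{-x}N(-d_1)$.

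\textbf{Step 2: first integration by parts.} Write $e^{-x}=-\frac{\d}{\d x}e^{-x}$ and integrate by parts on each half-line, so that the integrands become $e^{-x}\,g'(x)$. The boundary terms at $x=0$ cancel by put-call parity, since $C=P$ at the money when $F=K=1$. The boundary terms at $\pm\infty$ vanish under the standing integrability assumptions. After inserting $\partial_K f_\BS$, the put piece $N(-d_2)$ and the call piece $-N(d_2)$ can be handled the same way.

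\textbf{Step 3: merge into one expression.} An alternative cleaner approach is to note that $\frac{\d}{\d x}\bigl[d_2^2/2\bigr]$ produces the $\sigma^2$-weighted structure. Observe the identity
\[
\sigma^2(x)\Bigl(-\tfrac{\d d_2}{\d x}\Bigr)=\frac{\sigma(x)}{\sqrt T}-\sigma'(x)\sigma(x)\sqrt T\Bigl(\tfrac{x}{\sigma^2 T}-\tfrac12\Bigr),
\]
which follows from $d_2=-x/(\sigma\sqrt T)-\sigma\sqrt T/2$. The goal is then to show that the expression obtained in Step 2 equals
\[
\int n(d_2)\sigma^2(-d_2')\,\d x .
\]
The plan is to integrate by parts a second time those terms that carry $N(\pm d_2)$, using $\frac{\d}{\d x}N(d_2)=n(d_2)d_2'$. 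This should convert every term into an integral of $n(d_2)$ against a combination of $\sigma$, $\sigma'$ and $x$. One then matches it to the right-hand side of \eqref{eq:carr-lee-polishchuk} by the identity above.

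\textbf{Sanity check and main obstacle.} For a flat smile the right-hand side is $\sigma^2\int n(d_2)\,\d x/(\sigma\sqrt T)=\sigma^2$, as it should be, and this case guides the bookkeeping. The main obstacle is the bookkeeping in the second integration by parts: choosing the antiderivatives so that the boundary terms at $0$ cancel between the put and call halves and the terms at $\pm\infty$ vanish. The latter requires tail conditions on $\sigma$ (e.g.\ Lee's moment bounds, so that $d_2\to\mp\infty$), which I would take as part of the standing assumptions.
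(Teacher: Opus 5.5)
\textbf{There is a genuine gap at your final step.} Your Steps~1--2 are correct. After the first integration by parts (the terms at $x=0$ cancel by put-call parity) and the second one with antiderivative $x$, you obtain
\[
V^*=\frac{2}{T}\int_{-\infty}^{\infty} n(d_2)\bigl[x\,d_2'+\sqrt T\,\sigma'\bigr]\,\d x .
\]
This is $-\frac{2}{T}\mathbb E[\ln(S_T/F)]$ written through the digital prices $N(\mp d_2)\pm n(d_2)\sqrt T\,\sigma'$.

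The problem is that this integrand is \emph{not} pointwise equal to $n(d_2)\sigma^2(-d_2')$, so no algebraic identity can ``match'' the two. For a flat smile it is $-\frac{2x}{\sigma T^{3/2}}\,n(d_2)$ against $\frac{\sigma}{\sqrt T}\,n(d_2)$; only the integrals coincide. Choosing different antiderivatives $x+c$ in the second integration by parts does not help, since that only adds multiples of $\frac{\d}{\d x}N(d_2)$. In general the two integrands differ by an exact derivative,
\[
\frac{2}{T}\,n(d_2)\bigl[x\,d_2'+\sqrt T\,\sigma'\bigr]-n(d_2)\,\sigma^2(-d_2')=\frac{2}{\sqrt T}\,\frac{\d}{\d x}\Bigl[\sigma(x)\,n\bigl(d_2(x)\bigr)\Bigr].
\]
So a third integration by parts is indispensable:
\begin{enumerate}
\item write $x=-\sigma\sqrt T\,d_2-\tfrac12\sigma^2T$;
\item use $d_2\,d_2'\,n(d_2)=-\frac{\d}{\d x}\bigl[n(d_2)\bigr]$;
\item integrate $\sigma\,\frac{\d}{\d x}\bigl[n(d_2)\bigr]$ by parts.
\end{enumerate}
This produces $-\sqrt T\int n(d_2)\,\sigma'\,\d x$, which cancels the smile-slope term. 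What remains is $V^*=\frac{2}{T}\bigl(-\frac{T}{2}\bigr)\int n(d_2)\,\sigma^2 d_2'\,\d x$, which is the claim. This step needs the additional boundary condition $\sigma\,n(d_2)\to0$ at $\pm\infty$. The cancellation of the $\sigma'$ term is the heart of the Matytsin--Gatheral derivation. Your remark about $\frac{\d}{\d x}[d_2^2/2]$ points toward it, but you never use it.

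Separately, your displayed identity is off by a factor $\sigma$ in the $\sigma'$ terms. From $d_2'=-\frac{1}{\sigma\sqrt T}+\frac{x\sigma'}{\sigma^2\sqrt T}-\frac{\sqrt T}{2}\sigma'$ one gets $\sigma^2(-d_2')=\frac{\sigma}{\sqrt T}-\frac{x\sigma'}{\sqrt T}+\frac{\sqrt T}{2}\sigma^2\sigma'$. On the boundary terms, $d_2\to\mp\infty$ alone does not make $x\,N(\pm d_2)$ vanish. On the right wing, $d_2\le-\sqrt{2x}$ holds automatically (by AM--GM), but on the left wing this is a genuine tail assumption. For reference, the paper does not prove the lemma itself; it cites Carr--Lee (Sec.~5) and Gatheral (pp.~139--140). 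With the third integration by parts added, your argument would give a self-contained proof along the lines of Gatheral's log-contract derivation.
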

\begin{proof}
    See \citet[Sec.~5]{Carr2009}, who credit Andrew Matytsin (private communication), and Alexey Polishchuk for the derivation.  See also \citet[pp.~139--140]{Gatheral2004}.
\end{proof}
\begin{corollary}[Matytsin's formula]
    If $d_2(x,F)$ is strictly monotonically increasing in $x$, we obtain by change of variable $y = d_2(x,F)$  the elegant formula
    \begin{equation}    \label{eq:matytsin}
  V^*(F) =\int_{-\infty}^{\infty}{n\left(y\right){\tilde\sigma}^2(y, F)\d y},
 \end{equation}
    where $\tilde\sigma (y,F)=\sigma(d_2^{-1} (y; F), F)$.
\end{corollary}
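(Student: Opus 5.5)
The plan is a single change of variable in the integral of Lemma~\ref{lem:carr-lee-matytsin}, with $F$ held fixed throughout. Write the integral in equation~\eqref{eq:carr-lee-polishchuk} as
\[
V^*(F)=\int_{-\infty}^{\infty} n\big(d_2(x,F)\big)\,\sigma^2(x,F)\,\Big(-\frac{\partial d_2}{\partial x}(x,F)\Big)\,\d x .
\]
The factor $-\partial d_2/\partial x$ is exactly the Jacobian one needs. So the substitution $y=d_2(x,F)$ should collapse the integral to a Gaussian integral against $\tilde\sigma^2$.

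\textbf{Inverse map.} By the standing smoothness assumption, $d_2(\cdot,F)$ is $C^1$ in $x$. Under the monotonicity hypothesis it is a $C^1$ bijection from $\mathbb R$ onto its image interval $(a,b)$. I will denote its inverse by $x=d_2^{-1}(y;F)$. Then $\d y=\frac{\partial d_2}{\partial x}(x,F)\,\d x$ and $\sigma^2(x,F)=\tilde\sigma^2(y,F)$.

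\textbf{Orientation and endpoints.} The sign in equation~\eqref{eq:carr-lee-polishchuk} suggests that the natural monotonicity is actually decreasing. Indeed $d_2(x,F)=-x/(\sigma\sqrt T)-\tfrac12\sigma\sqrt T$ tends to $+\infty$ as $x\to-\infty$ and to $-\infty$ as $x\to+\infty$ when $\sigma$ is bounded. In that case $-\partial d_2/\partial x>0$ and
\[
\int_{-\infty}^{\infty}(\cdots)\Big(-\frac{\partial d_2}{\partial x}\Big)\d x=\int_{b}^{a}(\cdots)(-\d y)=\int_a^b n(y)\,\tilde\sigma^2(y,F)\,\d y .
\]
If $d_2$ is increasing as stated, the same computation holds with the sign absorbed. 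I will treat the statement as \emph{strictly monotone} and handle the orientation so that the Jacobian $|\partial d_2/\partial x|$ appears.

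\textbf{Main obstacle.} The real issue is showing that the image is all of $\mathbb R$, i.e.\ that $d_2(x,F)\to\mp\infty$ as $x\to\pm\infty$. This is what turns the limits $a,b$ into $\mp\infty$ and gives equation~\eqref{eq:matytsin} exactly.
\begin{itemize}
\item As $x\to-\infty$, I would use $\sigma>0$: then $-x/(\sigma\sqrt T)>0$ becomes large unless $\sigma$ grows at least linearly in $|x|$.
\item Roger Lee's moment formula bounds the wings by $\sigma^2(x)T\le 2|x|$ asymptotically. This keeps $|x|/\sigma$ large relative to $\sigma$ in both wings, so $d_2\to\pm\infty$ as needed.
\item If one does not want to invoke Lee's bound, I would simply note the following. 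If the image is a proper interval $(a,b)$, the formula holds with the integral over $(a,b)$. Under the paper's standing assumption that the spanning formula is well defined, the wings must satisfy this growth condition, so $(a,b)=\mathbb R$.
\end{itemize}

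Finally, the integrand $n(y)\tilde\sigma^2$ is nonnegative. The substitution is therefore justified by the standard change-of-variables theorem for monotone $C^1$ maps (Tonelli covers integrability), and no further limiting argument is needed.
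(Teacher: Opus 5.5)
Your main line is the paper's argument: the paper offers nothing beyond the substitution $y=d_2(x,F)$ at fixed $F$ in Lemma~\ref{lem:carr-lee-matytsin}. Your orientation bookkeeping is also correct. The factor $-\partial d_2/\partial x$ is the Jacobian of a \emph{decreasing} map, so the hypothesis should really read ``decreasing''. However, two of your supplementary claims do not hold as written, and one of them sits at the step you call the main obstacle.

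\textbf{The increasing case.} It is not handled by ``absorbing the sign''. If $d_2$ were increasing, the lemma's integral would equal $-\int_a^b n(y)\tilde\sigma^2(y,F)\,\d y<0$. In fact the case cannot occur. For $x>0$, with $v=\sigma(x,F)\sqrt T$, AM--GM gives $d_2(x,F)=-(x/v+v/2)\le-\sqrt{2x}$. So $d_2\to-\infty$ as $x\to+\infty$ for every positive smile, which is impossible for an increasing function. This bound also settles the right endpoint with no growth condition at all.

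\textbf{The left endpoint.} Neither your first bullet nor Lee's bound gives it. Large $|x|/v$ is not enough, since $d_2=|x|/v-v/2$. Lee's bound $\limsup v^2/|x|\le 2$ does not exclude $v^2=2|x|$, where both terms equal $\sqrt{|x|/2}$ and $d_2\equiv 0$ in the left wing. Lee's bound works only when the wing slope is strictly below $2$.

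Your fallback bullet is the right idea, and it has a short proof. Suppose $d_2\to c<\infty$ as $x\to-\infty$; the limit exists by monotonicity. Then $v\to\infty$, since otherwise $|x|/v$ would blow up along a subsequence. Hence $d_1=d_2+v\to\infty$. Using $F\,n(d_1)=K\,n(d_2)$ and the Mills bound $N(-z)\le n(z)/z$,
\[
\frac{P_\BS}{K}=N(-d_2)-\frac FK\,N(-d_1)\;\ge\; N(-d_2)-\frac{n(d_2)}{d_1}\;\longrightarrow\; N(-c)>0 .
\]
So $P(K)\ge\tfrac12 N(-c)\,K$ near $K=0$, and $\int_0^F P(K)K^{-2}\,\d K=\infty$, contradicting the finiteness of $V^*$. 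With both endpoints shown to be $\mp\infty$, your substitution gives equation~\eqref{eq:matytsin} exactly.
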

\begin{remark}
    If the smile $\sigma(x,F) \equiv \sigma(x)$ is a pure function of (log) moneyness, then equations \eqref{eq:carr-lee-polishchuk} and \eqref{eq:matytsin} do not depend on $F$. This provides an alternative proof that the variance swap total delta is zero.
\end{remark}

\end{document}